\theoremstyle{plain}
\newtheorem{theorem}{Theorem}
\newtheorem{lemma}[theorem]{Lemma}  
\newtheorem{corollary}[theorem]{Corollary}  
\newtheorem{definition}[theorem]{Definition}
\newtheorem{example}{Example}
\title{On Two Measures of Distance between Fully-Labelled Trees}
\author[1]{Giulia Bernardini}
\author[1]{Paola Bonizzoni}
\author[2]{Paweł Gawrychowski}
\affil[1]{DISCo, Universit\`{a} degli Studi Milano - Bicocca, Italy}
\affil[2]{Institute of Computer Science, University of Wrocław, Poland}
\newcommand{\cO}{\mathcal{O}}
\newcommand{\cG}{\mathcal{G}}
\newcommand{\cM}{\mathcal{M}}
\newcommand{\cI}{\mathcal{I}}
\newcommand{\Tone}{T_1}
\newcommand{\Ttwo}{T_2}
\newcommand{\Fone}{F_1}
\newcommand{\Ftwo}{F_2}
\newcommand{\ctO}{\tilde{\mathcal{O}}}
\newcommand{\alg}{\textsc{ALG}}
\newcommand{\conserved}{\mathsf{conserved}}
\newcommand{\children}{\mathsf{children}}
\newcommand{\head}{\mathsf{head}}
\newcommand{\level}{\mathsf{level}}
\newcommand{\access}{\mathsf{access}}
\newcommand{\mode}{\mathsf{mode}}
\newcommand{\freq}{\mathsf{freq}}
\newcommand{\rep}{\mathsf{rep}}
\begin{document}
\date{}
\maketitle

\begin{abstract}
The last decade brought a significant increase in the amount of data and a variety of new inference methods for reconstructing the detailed
evolutionary history of various cancers. This brings the need of designing efficient procedures for comparing rooted
trees representing the evolution of mutations in tumor phylogenies. Bernardini et al. [CPM 2019] recently introduced
a notion of the rearrangement distance for fully-labelled trees motivated by this necessity. This notion originates from
two operations: one that permutes the labels of the nodes, the other that affects the topology of the tree. 
Each operation alone defines a distance that can be computed in polynomial time, while the actual rearrangement distance,
that combines the two, was proven to be NP-hard.

We answer two open question left unanswered by the previous work. First, what is the complexity of computing the permutation distance? 
Second, is there a constant-factor approximation algorithm for estimating the rearrangement distance between two arbitrary trees?
We answer the first one by showing, via a two-way reduction, that calculating the permutation distance between two trees on $n$
nodes is equivalent, up to polylogarithmic factors, to finding the largest cardinality matching in a sparse bipartite graph.
In particular, by plugging in the algorithm of Liu and Sidford Liu and Sidford [ArXiv 2020], we obtain an $\ctO(n^{4/3+o(1})$ time algorithm for computing
the permutation distance between two trees on $n$ nodes.
Then we answer the second question positively, and design a linear-time constant-factor approximation algorithm that does not
need any assumption on the trees.

\end{abstract}

\section{Introduction}
Phylogenetic trees represent a plausible evolutionary relationship between the most disparate objects: natural languages in linguistics~\cite{gray2009language,walker2012cultural,nakhleh2005comparison}, ancient manuscripts in archaeology~\cite{buneman1971recovery}, genes and species in biology~\cite{huber2006phylogenetic,huson2006application}.
The leaves of such trees are labelled by the entities they represent, while the internal nodes are unlabelled and stand for unknown or extinct items.
A great wealth of methods to infer phylogenies have been developed over the decades~\cite{felsenstein2004inferring,steel2016phylogeny}, together with various techniques to compare the output of different algorithms, e.g., by building a consensus tree that captures the similarity between a set of conflicting trees~\cite{bryant2003classification,jansson2016improved,jansson2016algorithms,DBLP:conf/icalp/GawrychowskiLSW18} or by defining a metric between two trees~\cite{dobson1975comparing,brodal2013efficient,estabrook1985comparison,dudek2019computing,robinson1979comparison,robinson1981comparison}.

Fully-labelled trees, in opposition to classical phylogenies, may model an evolutionary history where the internal nodes, just like the leaves, correspond to extant entities.
An important phenomenon that fits this model well is cancer progression~\cite{hajirasouliha2014combinatorial,nowell1976clonal}.
With the increasing amount of data and algorithms becoming available for inferring cancer evolution~\cite{malikic2019phiscs,jiao2014inferring,yuan2015bitphylogeny,bonizzoni2018does,bonizzoni2017beyond}, there is a pressing need of methods to provide a meaningful comparison among the trees produced by different approaches.
Besides the well-studied edit distance for fully-labelled trees~\cite{tai1979tree,zhang1989simple,pawlik2015efficient,mcvicar2016sumoted}, a few recent papers proposed ad-hoc metrics for tumor phylogenies~\cite{karpov2019multi,govek2018consensus,DiNardo2019,Ciccolella2020.04.14.040550}.
Taking inspiration from the existing literature~\cite{dasgupta1997distances, bordewich2005computational, allen2001subtree,steel2016phylogeny} on phylogeny rearrangement, the study of an operational notion of distance for
rearranging a fully-labelled tree is of great interest, and there are still many unexplored questions to be answered.

Following this line of research, we revisit the two notions of operational distance between fully-labelled trees recently introduced by Bernardini et al.~\cite{bernardini2019distance}. 
We consider rooted trees on $n$ nodes labelled with distinct labels from $[n]=\{1,2,\ldots,n\}$, and identify nodes with their labels.
We recall the following two basic operations on such trees:
\begin{itemize}
\item \textbf{link-and-cut operation}: given $u$, $v$ and $w$ such that $v$ is a child of $u$ and $w$ is not a descendant of $v$, the link-and-cut operation $v\,|\,u\rightarrow w$ consists of two suboperations: cut the edge $(v,u)$ and add the edge $(v,w)$, effectively switching the parent of $v$ from $u$ to $w$.
\item \textbf{permutation operation}: apply some permutation $\pi : [n] \rightarrow [n]$ to the nodes. 
If a node $u$ was a child of $v$ before the operation, then after the operation $\pi(u)$ is a child of $\pi(v)$.
\end{itemize}
The size $|\pi|$ of a permutation is the number of elements $x$ s.t. $\pi(x)\neq x$.
Two trees $\Tone$ and $\Ttwo$ are isomorphic if and only if one can reorder the children of every node so as to make the
trees identical after disregarding the labels. The \emph{permutation distance} $d_{\pi}(T_1, T_2)$ between two isomorphic trees is the smallest size $|\pi|$ of a permutation $\pi$ that transforms $\Tone$ into $\Ttwo$.
Bernardini et al.~\cite{bernardini2019distance} designed a cubic time algorithm for computing the permutation distance.

The size of a sequence of link-and-cut and permutation operations is the sum of the number of link-and-cut
operations and the total size of all permutations.
The \emph{rearrangement distance} $d(\Tone,\Ttwo)$ between two (not necessarily isomorphic) trees with identical roots is the smallest size of any sequence of link-and-cut and permutation operations that, without permuting the root, transform $\Tone$
into $\Ttwo$.
Bernardini et al.~\cite{bernardini2019distance} proved that computing the rearrangement distance is NP-hard,
but for binary trees there exists a polynomial time 4-approximation algorithm.

We consider two natural open questions. First, what is the complexity of
computing the permutation distance? Second, is there a constant-factor approximation algorithm for estimating
the rearrangement distance between two arbitrary trees?
For computing the permutation distance, in Section~\ref{sec:permutation} we connect the complexity to that of calculating the largest cardinality matching in a sparse bipartite graph. By designing two-way reductions we show that these problems are equivalent, up to polylogarithmic factors.
Due to the recent progress in the area of fine-grained complexity we now know, for many problems that can be solved in polynomial
time, what is essentially the best possible exponent in the running time, conditioned on some plausible but yet unproven
hypothesis~\cite{Williams18}. For max-flow, and more specifically maximum matching, this is not the case yet, although
we do have some understanding of the complexity of the related problem of computing the max-flow between all pairs of
nodes~\cite{AbboudKT20,KrauthgamerT18,AbboudGIKPTUW19}. So, even though our reductions don't tell us what is the best possible
exponent in the running time, they do imply that it is the same as for maximum matching in a sparse bipartite graph.
In particular, by plugging in the asymptotically fastest known algorithm~\cite{liu2020faster}, we obtain an $\ctO(n^{4/3+o(1)})$ time algorithm for computing
the permutation distance between two trees on $n$ nodes. The main technical novelty in our reduction from permutation
distance is that, even though the natural approach would result in multiple instances of weighted maximum bipartite matching,
we manage to keep the graphs unweighted.

For the rearrangement distance, in Section~\ref{sec:approx} we design a linear-time constant-factor approximation algorithm that does
not assume that the trees are binary. The algorithm consists of multiple phases, each of them introducing more and
more structure into the currently considered instance, while making sure that we don't pay more than the optimal distance
times some constant. To connect the number of steps used in every phase with the optimal distance, we introduce a new
combinatorial object that can be used to lower bound the latter inspired by the well-known algorithm
for computing the majority~\cite{Moore91}.

\section{Preliminaries}

Let $[n]=\{1,2,\ldots,n\}$.
We consider rooted trees and forests on nodes labelled with distinct labels from $[n]$, and identify nodes with their
labels.
The parent of $u$ in $F$ is denoted $p_{F}(u)$, and we use the convention
that $p_{F}(u)=\bot$ when $u$ is a root in $F$. $F|u$ denotes the subtree of $F$ rooted at $u$,
$\children_{F}(u)$ stands for the set of children of a node $u$ in $F$, and $\level_{F}(u)$ is the level of $u$ in $F$
(with the roots being on level 0).

Two trees $\Tone$ and $\Ttwo$ are isomorphic, denoted $\Tone \equiv \Ttwo$, if and only if there exists a bijection $\mu$ between their nodes
such that, for every $u\in[n]$ with $p_{T_{1}}(u)\neq\bot$, it holds that $\mu(p_{T_{1}}(u))=p_{T_{2}}(\mu(u))$, implying in particular that $\mu$ maps the root of $\Tone$ to the root of $\Ttwo$.
Let $\cI(\Tone,\Ttwo)$ denote the set of all such bijections $\mu$.
Given two isomorphic trees $\Tone$ and $\Ttwo$, we seek a permutation $\pi$ with the smallest size that
transforms $\Tone$ into $\Ttwo$. This is equivalent to finding $\mu\in\cI(\Tone,\Ttwo)$ that maximises
the number of conserved nodes $\conserved(\mu)=\{ u :  u = \mu(u) \}$,
as these two values sum up to $n$. 

When working on the rearrangement distance, for ease of presentation, instead of the link-and-cut operation we will work with the cut operation defined as follows:
\begin{itemize}
\item \textbf{cut operation}: let $u,v$ be two nodes such that $v$ is a child of $u$. The cut operation
$(v\dagger u)$ removes the edge $(v,u)$, effectively making $v$ a root.
\end{itemize}
The size of a sequence of cut and permutation operations is defined similarly as for a sequence
of link-and-cut and permutation operations.
Since a permutation operation is essentially just renaming the nodes, we can assume
that all permutation operations precede all link-and-cut (or cut) operations, or vice versa.
Furthermore, multiple consecutive permutation operations can be replaced by a single permutation operation
without increasing the total size.

This leads to the notion of rearrangement distance between two forests $\Fone$ and $\Ftwo$.
We write $\Fone\sim \Ftwo$ to denote that, for every $u\in [n]$, at least one of the following three conditions holds:
(i) $p_{\Fone}(u) = p_{\Ftwo}(u)$,
(ii) $p_{\Fone}(u)=\bot$, or
(iii) $p_{\Ftwo}(u)=\bot$.
The rearrangement distance $\tilde{d}(\Fone,\Ftwo)$ is the
smallest size of any sequence of cut and permutation operations that transforms $\Fone$ into $\Fone'$ such that $\Fone' \sim \Ftwo$.
This is the same as the smallest size of any sequence of cut and permutation operations
that transforms $\Ftwo$ into $\Ftwo'$ such that $\Fone \sim \Ftwo'$, as both sizes are equal
to the minimum over all permutations $\pi$ that fix the original root of the following expression
\[ |\{ u : \pi(u) \neq u\}| + |\{ u : p_{\Fone}(u) \neq p_{\Ftwo}(\pi(u))~\land~p_{\Fone}(u)\neq \bot~\land~p_{\Ftwo}(\pi(u))\neq \bot \}| . \]
Consequently, $\tilde{d}$ defines a metric.
The original notion of rearrangement distance $d$ between two trees was similarly defined as the smallest size of any sequence of link-and-cut and permutation operations that transforms $\Tone$ into $\Ttwo$, under the additional assumption that the roots of
$\Tone$ and $\Ttwo$ are identical (so $d(\Tone,\Ttwo)$ is well-defined) and cannot participate in any
permutation operation~\cite{bernardini2019distance}.
In Section~\ref{sec:approx} we connect $d(\Tone,\Ttwo)$ and $\tilde{d}(\Tone,\Ttwo)$, and then work with the latter.

A matching in a bipartite graph is a subset of edges with no two edges meeting at the same vertex. A maximum matching
in an unweighted bipartite graph is a matching of maximum cardinality, whereas a maximum weight matching in a weighted
bipartite graph is a matching in which the sum of weights is maximised. 
Given an unweighted bipartite graph with $m$ edges, the well-known algorithm by Hopcroft and Karp~\cite{HopcroftK73} finds
a maximum matching in $\cO(m^{1.5})$ time. This has been recently improved by Liu and Sidford to $\tilde\cO(m^{4/3+o(1)})~$\cite{liu2020faster}. 

A \emph{heavy path decomposition} of a tree $T$ is obtained by selecting, for every non-leaf node $u\in T$, its \emph{heavy child} $v$ such
that $T|v$ is the largest: there will be some subtlety in how to resolve a tie in this definition that will be explained in detail later.
This procedure decomposes the nodes of $T$ into node-disjoint paths called \emph{heavy paths}.
Each heavy path $p$ starts at some node, called its \emph{head}, and ends at a leaf: $\head_{T}(u)$ denotes the head of the heavy path containing a node $u$ in $T$. An important property of such a decomposition is
that the number of distinct heavy paths above any leaf (that is, intersecting the path from a leaf to the root) is only logarithmic
in the size of $T$~\cite{SleatorT83}. 

\section{A Fast Algorithm for the Permutation Distance}
\label{sec:permutation}
Our aim is to find $\mu\in\cI(\Tone,\Ttwo)$ that maximises $\conserved(\mu)$, that is
$\gamma(\Tone,\Ttwo)=\max \{ \conserved(\mu) : \mu \in \cI(\Tone,\Ttwo)\}$. To make the notation less
cluttered, we define $\gamma(x,y)=\gamma(\Tone|x,\Ttwo|y)$.
Let us start by describing a simple polynomial time algorithm which follows the construction of~\cite{bernardini2019distance}.
We will then show how to improve it to obtain a faster algorithm that uses unweighted bipartite maximum matching.
Finally, we will show a reduction from bipartite maximum matching to computing
the permutation distance, establishing that these two problems are in fact equivalent, up to polylogarithmic factors.

\subsection{Polynomial Time Algorithm }\label{subsec:n2}

We first run the folklore linear-time algorithm of~\cite{Aho} for determining if two rooted trees are isomorphic.
Recall that this algorithm assigns a number from $\{1,2,\ldots,2n\}$ to every node of $\Tone$ and $\Ttwo$ so that the subtrees
rooted at two nodes are isomorphic if and only if their numbers are equal.
The high-level idea is then to consider a weighted bipartite graph
$G(u,v)$ for each $u,v\in [n]$ such that $\level_{\Tone}(u)=\level_{\Ttwo}(v)$ and $\Tone|u \equiv \Ttwo|v$.
The vertices of $G(u,v)$ are $\children_{\Tone}(u)$ and $\children_{\Ttwo}(v)$, and there is
an edge of weight $\gamma(u',v')$ between $u' \in \children_{\Tone}(u)$ and $v' \in \children_{\Ttwo}(v)$ if and only if
$\Tone|u' \equiv \Ttwo|v'$ and $\gamma(u',v')>0$. 
We call such graphs the \emph{distance graphs} for $\Tone$ and $\Ttwo$ and denote them collectively by $\cG(\Tone,\Ttwo)$.

$\gamma(u,v)$ is computed as follows, with $\cM(G(u,v))$ denoting the weight of
a (not necessarily perfect) maximum weight matching in $G(u,v)$, $\Gamma(u,v)=1$ if $u=v$ and $\Gamma(u,v)=0$ otherwise.
\begin{equation}
\label{eq:recursion}
  \gamma(u,v) = \left\{
    \begin{array}{ll}
      \cM(G(u,v))+ \Gamma(u,v) & \textrm{ if } T_1|u \equiv T_2|v,\\
      0 & \textrm{ otherwise.}\\
    \end{array}
  \right.
\end{equation}
The overall number of edges created in all graphs is $\cO(n^2)$. Indeed, for each $u\in [n]$
such that $\level_{\Tone}(u)=\level_{\Ttwo}(u)$ and $\Tone|u \equiv \Ttwo|u$,
and for each pair of ancestors $z$ of $u$ in $\Tone$ and $w$ of $u$ in $\Ttwo$ such that $\level_{\Tone}(z)=\level_{\Ttwo}(w)$ and
$\Tone|z \equiv \Ttwo|w$, we possibly add an edge $(z,w)$ to the graph $G(p_{\Tone}(z),p_{\Ttwo}(w))$.
Since there are up to $n$ 
pairs of ancestors on the same level for each label, and the labels are $n$, there are $\cO(n^2)$ edges overall.

We then start from the deepest level in both trees, and we move up level by level
towards the roots in both trees simultaneously. For each level $k$, we consider all pairs of isomorphic subtrees rooted at level $k$,
build the corresponding distance graphs, and use Equation (\ref{eq:recursion}) to weigh the edges. After having reached the roots, we
return the value of $\gamma(\Tone,\Ttwo)$. The correctness of the algorithm follows directly from Lemma 13 of~\cite{bernardini2019distance}, stating that the permutation distance is equal to the minimum number of labels that are not conserved by any isomorphic mapping, i.e., $d_{\pi}(\Tone,\Ttwo)=n-\gamma(\Tone,\Ttwo)$. The running time is polynomial if we plug in any polynomial-time maximum weight matching algorithm.

In the next subsection we show how to obtain a better running time by constructing a different version of distance graphs, so that the total weight of their edges will be subquadratic,
and replacing maximum weight matching with maximum matching.

\subsection{Reduction to Bipartite Maximum Matching}
\label{sub:reduction}
We start by finding a heavy path decomposition of $\Tone$ and $\Ttwo$, with some extra care in resolving a tie if there are multiple children with subtrees of the same size, as follows.
Recall that we already know which subtrees of $\Tone$ and $\Ttwo$ are isomorphic, as the algorithm of~\cite{Aho} assigns the same number from $\{1,2,\ldots,2n\}$ to nodes of $\Tone$ and $\Ttwo$ with isomorphic subtrees.
For every $u,v\in [n]$ such that $\Tone|u \equiv \Ttwo|v$, we would like the heavy child $u'$ of $u$ in $\Tone$
and $v'$ of $v$ in $\Ttwo$ to be such that $\Tone|u' \equiv \Ttwo|v'$. 
This can be implemented in linear time: it suffices to group the
nodes with isomorphic subtrees together, and then make the choice just once for every such group.

Consider a graph $G(u,v)$ for some $u,v\in [n]$: the edge corresponding to the heavy child $u'$ of $u$ in $\Tone$
and the heavy child $v'$ of $v$ in $\Ttwo$ is called \emph{special} (note that this edge might not exist).
The key observation is that the properties of heavy path decomposition
allow us to bound the total weight of non-special edges by $\cO(n\log n)$.

\begin{figure}[t]
    \centering
   \includegraphics[width=0.73\linewidth]{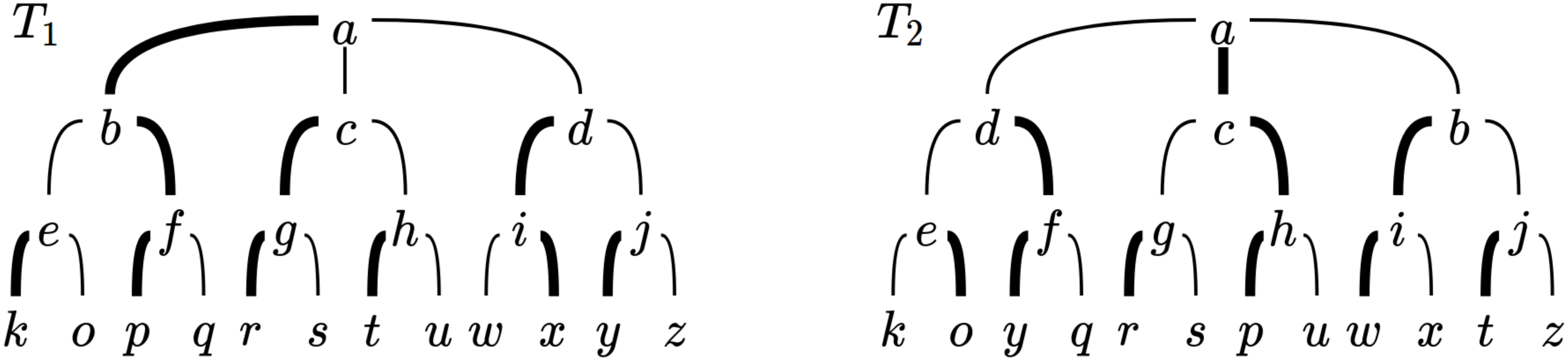}
    \caption{$\Tone$ and $\Ttwo$ with a possible heavy path decomposition.}
    \label{fig:paths}
\end{figure}

\begin{figure}
\centering
    \begin{subfigure}[b]{0.3\textwidth}
        \includegraphics[width=\linewidth]{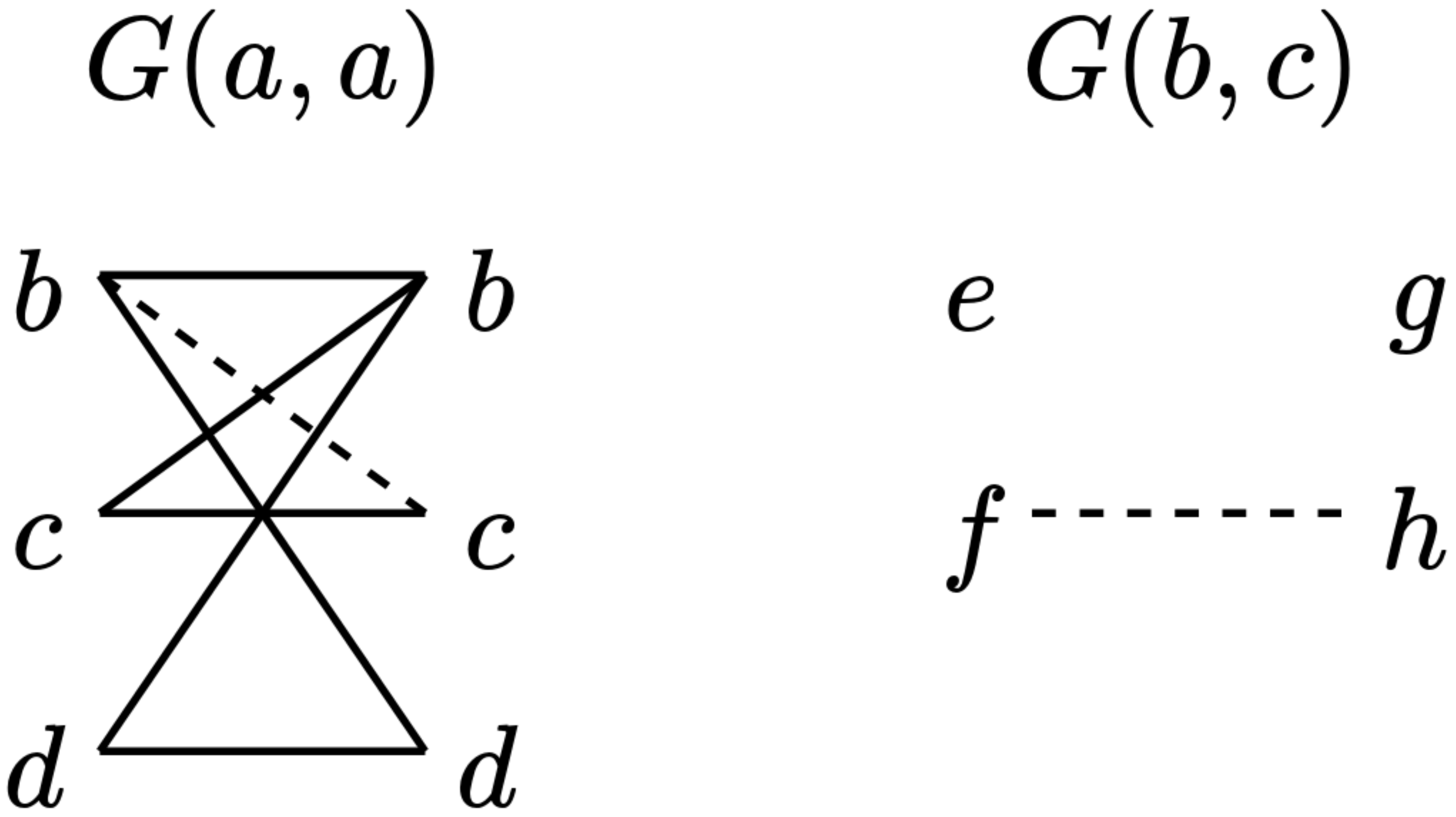}
        \vspace{-3mm}
    \end{subfigure}
    ~~~~~~~~
    \begin{subfigure}[b]{0.105\textwidth}
        \includegraphics[width=\linewidth]{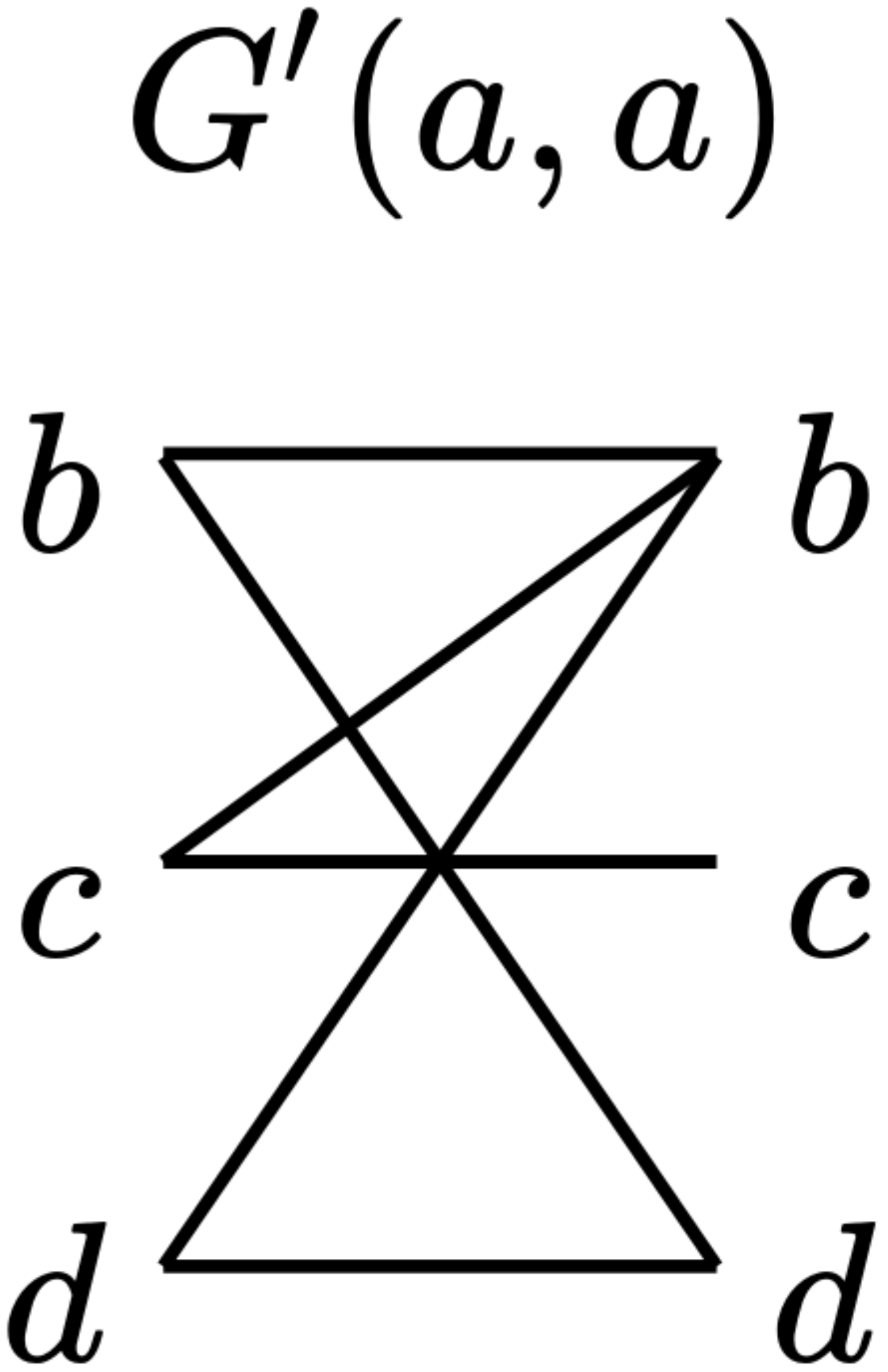}
        \vspace{-3mm}
    \end{subfigure}
     ~~~~~~~~
    \begin{subfigure}[b]{0.105\textwidth}
        \includegraphics[width=\linewidth]{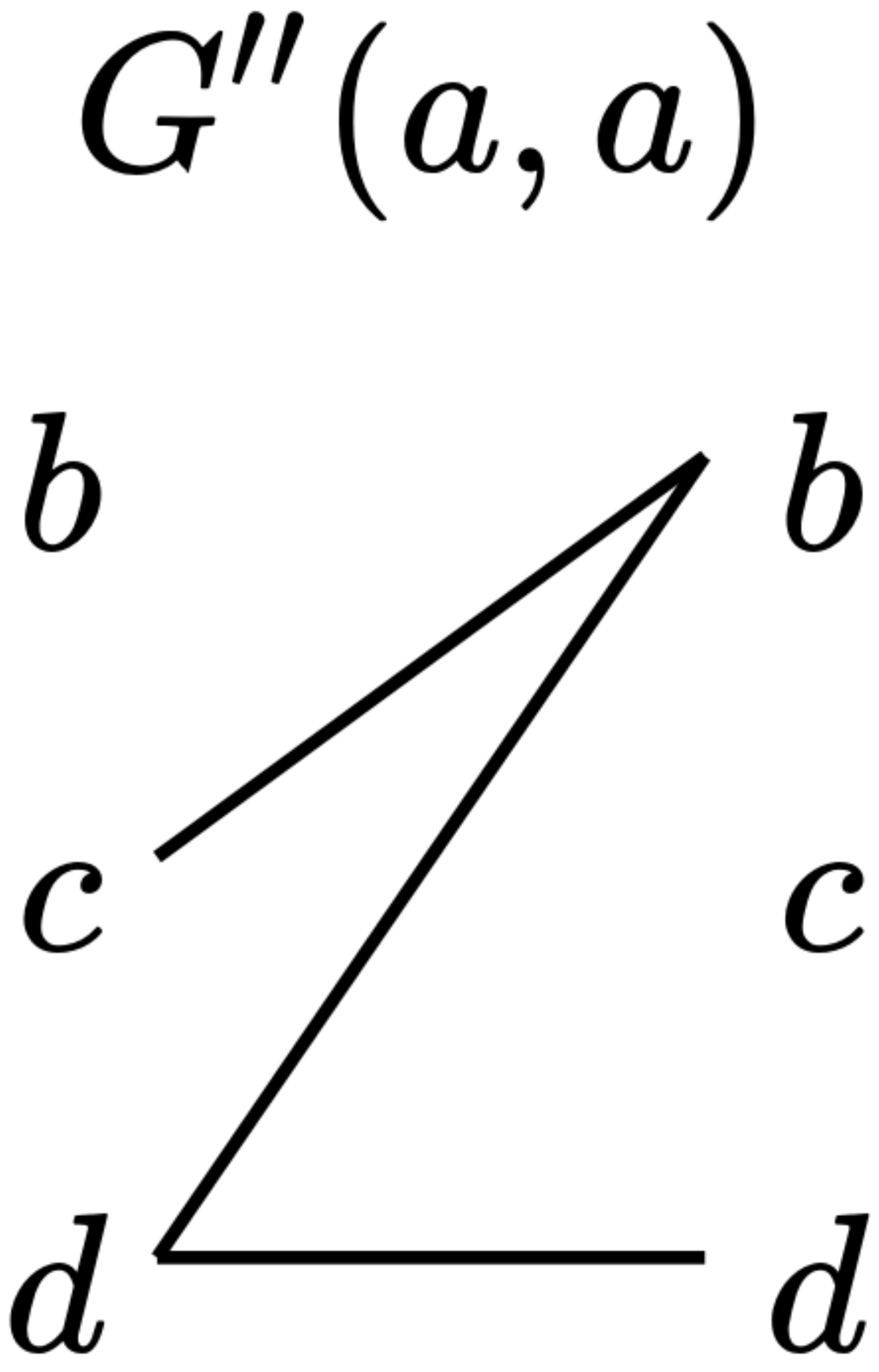}
        \vspace{-3mm}
    \end{subfigure}
    
    \caption{$G(a,a)$ (type 1), $G'(a,a)$, $G''(a,a)$ and $G(b,c)$ (type3) for the trees of Figure~\ref{fig:paths}. The special edge in each graph is dashed.}
    \label{fig:graphs}
\end{figure}

\begin{lemma}
\label{lem:nonspecial}
The total weight of all non-special edges in $\cG(\Tone,\Ttwo)$ is $\cO(n\log n)$.
\end{lemma}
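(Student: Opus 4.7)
The plan is to first relate each edge weight to a count of labels common to the two subtrees, and then to apply a charging argument per label, exploiting the heavy path decomposition.

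First I would establish the bound $\gamma(u',v')\le |\{x\in[n] : x\in\Tone|u' \text{ and } x\in\Ttwo|v'\}|$ for every edge $(u',v')\in\cG(\Tone,\Ttwo)$: if $\mu$ is an isomorphism between $\Tone|u'$ and $\Ttwo|v'$ and $\mu(x)=x$, then the label $x$ must appear both as a node of $\Tone|u'$ (the domain) and as a node of $\Ttwo|v'$ (the codomain). This bound is strong enough because summing over $x$ will let me turn the unwieldy sum of $\gamma$-values into a per-label count.

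Swapping the order of summation, the total weight $W$ of non-special edges satisfies $W \le \sum_{x\in[n]} N(x)$, where $N(x)$ is the number of non-special edges $(u',v')$ with $x\in\Tone|u'$ and $x\in\Ttwo|v'$. I would then fix $x$ and analyze $N(x)$. The condition $x\in\Tone|u'$ forces $u'$ to lie on the path from the root of $\Tone$ down to the node labelled $x$, and symmetrically for $v'$ in $\Ttwo$. Moreover, every edge of $\cG(\Tone,\Ttwo)$ connects vertices at equal levels, so $\level_{\Tone}(u')=\level_{\Ttwo}(v')=k$ for some $k$. Hence, for each level $k$ there is at most one edge contributing to $N(x)$, with $u'$ being the level-$k$ ancestor of $x$ in $\Tone$ and $v'$ the level-$k$ ancestor of $x$ in $\Ttwo$. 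This is the crucial observation that reduces a potentially large count to a level-by-level one.

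To finish, I would invoke the heavy path property: a non-special edge requires $u'$ to not be the heavy child of $p_{\Tone}(u')$ or $v'$ to not be the heavy child of $p_{\Ttwo}(v')$. Since the root-to-$x$ path in $\Tone$ intersects only $\cO(\log n)$ heavy paths, at most $\cO(\log n)$ of its non-root nodes are non-heavy children, and the same holds in $\Ttwo$. Thus each level-parameterized edge can be non-special only when its level falls in one of these $\cO(\log n)$ values of $k$, giving $N(x)=\cO(\log n)$ and $W=\cO(n\log n)$ after summing over all labels.

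The main conceptual step is the level-parameterization that collapses ``non-special edges charged to $x$'' into ``non-heavy-child ancestors of $x$ in either tree''; after that, the logarithmic bound is just the standard Sleator-Tarjan estimate and I do not anticipate substantial technical obstacles. One minor thing to double-check is that the tie-breaking used when defining the heavy path decomposition does not interfere with the counting, but since the argument only uses that non-heavy children halve the subtree size (independent of the tie-breaking rule), this should go through cleanly.
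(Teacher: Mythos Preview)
Your proposal is correct and follows essentially the same charging argument as the paper: both bound the total weight by summing, over each label $x$, the number of non-special edges $(u',v')$ with $x$ lying in both subtrees, and then use the heavy path decomposition to cap that per-label count at $\cO(\log n)$. The only cosmetic difference is that the paper restricts the charged labels to those with $\level_{\Tone}(x)=\level_{\Ttwo}(x)$ and $\Tone|x\equiv\Ttwo|x$, whereas you use the weaker condition $x\in\Tone|u'\cap\Ttwo|v'$; both are valid upper bounds on $\gamma(u',v')$ and lead to the same $\cO(n\log n)$ conclusion.
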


\begin{proof}
Consider any $u\in [n]$ such that $\level_{\Tone}(u)=\level_{\Ttwo}(u)$ and $\Tone|u \equiv \Ttwo|u$.
For each pair of ancestors $z$ of $u$ in $\Tone$ and $w$ of $u$ in $\Ttwo$ such that $\level_{\Tone}(z)=\level_{\Ttwo}(w)$,
$\Tone|z \equiv \Ttwo|w$ and either $\head_{\Tone}(z)=z$ or $\head_{\Ttwo}(w)=w$, $u$ will contribute 1 to the weight of an edge $(z,w)$ in $G(p_{\Tone}(z),p_{\Ttwo}(w))$.
Because there are at most $\log n$ heavy paths above
any node of $\Tone$ or $\Ttwo$, each label $u\in [n]$ contributes 1 to the weight of at most $2\log n$ non-special edges, making their total weight $\cO(n\log n)$ overall.
\end{proof}

We divide the graphs in $\cG(\Tone,\Ttwo)$ into three types (see Figure~\ref{fig:graphs}, left, for an example):
\begin{description}
\item[Type 1: ] graphs $G(u,v)$ with at least one non-special edge.
\item[Type 2: ] graphs $G(u,v)$ with no non-special edges, and $\Gamma(u,v)=1$.
\item[Type 3: ] graphs $G(u,v)$ with no non-special edges, and $\Gamma(u,v)=0$.
\end{description}

We will construct only the graphs of type 1 and 2, and extract from them the information that the graphs of type 3 would have captured.
In what follows we show  how to construct the graphs of type 1 and 2 in $\cO(n\log^{2} n)$ time.

\paragraph{Constructing the Graphs of Type 1 and 2.}\label{subsub:constructing}
The first step is to find all pairs of nodes that correspond to graphs of type 1 or 2, and store them
in a dictionary $D$ implemented as a balanced search tree with $\cO(\log n)$ access time.
The second step is to find the non-special edges of these graphs, and store them in a separate dictionary, also implemented as a balanced search tree with $\cO(\log n)$ access time.
Note that the weights will be found at a later stage of the algorithm.
We assume that both trees have been already decomposed into heavy paths, and we already know which subtrees are isomorphic. 
This can be preprocessed in $\cO(n)$ time. 

\begin{lemma}
All graphs of type 1 and 2 can be identified in $\cO(n\log^2 n)$ time.
\end{lemma}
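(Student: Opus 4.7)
The plan is to identify each type 1 graph by enumerating all non-special edges and recording, for each one, its containing graph in the dictionary $D$; then to identify each type 2 graph as a candidate pair $(u,u)$ not already classified as type 1. I would first preprocess in $\cO(n)$ time: a heavy path decomposition of $\Tone$ and $\Ttwo$ using the isomorphism-preserving tie-breaking rule described above; Aho's isomorphism labelling so that $\Tone|x\equiv\Ttwo|y$ is decidable in $\cO(1)$; the levels, heads, and parent pointers; and a constant-time level-ancestor structure on each tree.

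For the main step, I would iterate over every $u\in[n]$ with $\level_{\Tone}(u)=\level_{\Ttwo}(u)$ and $\Tone|u\equiv\Ttwo|u$ and enumerate the non-special edges to which $u$ contributes in the proof of Lemma~\ref{lem:nonspecial}. Concretely, I would list the $\cO(\log n)$ nodes $z$ on the root-to-$u$ path in $\Tone$ satisfying $\head_{\Tone}(z)=z$, obtained by repeatedly jumping from the current node to its head and then to the head's parent. For each such $z$, one level-ancestor query in $\Ttwo$ returns in $\cO(1)$ the ancestor $w$ of $u$ at level $\level_{\Tone}(z)$, and a constant-time isomorphism check decides whether $(z,w)$ is an edge; if it is, I insert the identifier $(p_{\Tone}(z), p_{\Ttwo}(w))$ into $D$. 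A symmetric scan over nodes $w$ with $\head_{\Ttwo}(w)=w$ on the root-to-$u$ path in $\Ttwo$ handles the remaining non-special edges; edges with both endpoints being heads are discovered twice but $D$ deduplicates silently. Each label $u$ triggers $\cO(\log n)$ insertions, each costing $\cO(\log n)$, so this stage runs in $\cO(n\log^2 n)$.

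To identify the type 2 graphs I would then scan every $u\in[n]$ with $\level_{\Tone}(u)=\level_{\Ttwo}(u)$ and $\Tone|u\equiv\Ttwo|u$---exactly the candidates for which $G(u,u)$ exists---and look up $(u,u)$ in $D$ in $\cO(\log n)$ time. If the key is absent, then $G(u,u)$ has no non-special edges and satisfies $\Gamma(u,u)=1$, so it is of type 2. This stage costs $\cO(n\log n)$.

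The main point requiring care is completeness of the enumeration: every non-special edge in $\cG(\Tone,\Ttwo)$ must be generated at least once. This follows from the counting argument inside the proof of Lemma~\ref{lem:nonspecial}, where the weight of a non-special edge $(z,w)$ is witnessed by the labels $u$ contributing to it, so every edge with positive weight has at least one contributing $u$ among those we iterate over. The remaining bookkeeping---double counting across the two symmetric scans and avoiding $\cO(\log n)$ walks in the inner loop---is handled transparently by $D$ and by the level-ancestor structure, so the total running time stays at $\cO(n\log^2 n)$.
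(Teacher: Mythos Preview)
Your proposal is correct and follows essentially the same approach as the paper: iterate over each label $u$ with $\level_{\Tone}(u)=\level_{\Ttwo}(u)$ and $\Tone|u\equiv\Ttwo|u$, enumerate the $\cO(\log n)$ ancestor pairs $(z,w)$ where at least one of $z,w$ is a heavy-path head, insert the parent pair into $D$, and then do a second pass to pick up the type~2 graphs $G(u,u)$. The only difference is implementation detail: the paper enumerates these pairs with a single interleaved climb of both heavy-path decompositions using an $\access$ primitive on each heavy path, whereas you do two symmetric scans (heads in $\Tone$, then heads in $\Ttwo$) using a global level-ancestor structure and let $D$ absorb the duplicates---both yield the same $\cO(n\log^2 n)$ bound.
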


\begin{proof}
We consider every $u\in [n]$ such that $\level_{\Tone}(u)=\level_{\Ttwo}(u)$
and $\Tone | u \equiv \Ttwo | u$ in two passes. In the first pass,
we need to iterate over every ancestor $z$ of $u$ in $\Tone$
and $w$ of $u$ in $\Ttwo$ such that $\level_{\Tone}(z)=\level_{\Ttwo}(w)$, $T_{1}|z \equiv T_{2}|w$ and either $\head_{\Tone}(z)=z$ or $\head_{\Ttwo}(w)=w$, 
and if additionally $T_{1}|p_{T_{1}}(z) \equiv T_{2}|p_{T_{2}}(w)$ then designate $G(p_{T_{1}}(z),p_{T_{2}}(w))$ to be a graph of type 1 and insert it into $D$. 
As a non-special edge $(z,w)$ of a graph $G(p_{T_{1}}(z),p_{T_{2}}(w))$ is such that either $z$ or $w$ are not on the same heavy path as their parents, this correctly determines all graphs of type 1.

To efficiently iterate over all such $z$ and $w$ given $u$,
we assume that the nodes of every heavy path of a tree $T$ are stored in an array, so that, given any node $u\in T$, we are able to access the node that belongs to the same heavy path as $u$ and whose level is $\ell$ in constant time, if it exists. 
We denote such operation $\access_{T}(u,\ell)$.
Given two nodes $u\in\Tone$ and $v\in\Ttwo$ on the same level, the procedure below shows how to iterate over every ancestor $z$ of $u$ and $w$ of $v$ such that $\level_{\Tone}(z)=\level_{\Ttwo}(w)$ and either $\head_{\Tone}(z)=z$ or $\head_{\Ttwo}(w)=w$, in $\cO(\log n)$ time,
implying that all graphs of type 1 can be identified in $\cO(n\log^{2}n)$ time.

{
\setlength{\interspacetitleruled}{0pt}%
\setlength{\algotitleheightrule}{0pt}%
\begin{algorithm2e}[H]
\While{$u \neq \bot$ {\bf and} $v\neq \bot$}{
  \uIf{$ \level_{T_{1}}(\head_{T_{1}}(u)) < \level_{T_{2}}(\head_{T_{2}}(v)) $}{
    output $\access_{T_{1}}(u,\level_{T_{2}}(\head_{T_{2}}(v)))$ and $\head_{T_{2}}(v)$\;
    $v \gets p_{T_{2}}(\head_{T_{2}}(v))$\;
  }\eIf{$ \level_{T_{1}}(\head_{T_{1}}(u)) > \level_{T_{2}}(\head_{T_{2}}(v)) $}{
    output $\head_{T_{1}}(u)$ and $\access_{T_{2}}(v,\level_{T_{1}}(\head_{T_{1}}(u)))$\;
    $u \gets p_{T_{1}}(\head_{T_{1}}(u))$\;
  }{
    output $\head_{T_{1}}(u)$ and $\head_{T_{2}}(v)$\;
    $u \gets p_{T_{1}}(\head_{T_{1}}(u))$\;
    $v \gets p_{T_{2}}(\head_{T_{2}}(v))$\;
  }
}
\end{algorithm2e}
}

In the second pass,
for each $u\in [n]$ such that $\level_{\Tone}(u)=\level_{\Ttwo}(u)$
and $\Tone | u \equiv \Ttwo | u$, we designate $G(u,u)$ to be a graph of type 2, unless it has been already designated to be a graph of type 1.
\end{proof}

\begin{lemma}
All graphs of type 1 and 2 can be populated with their edges in $\cO(n\log^2 n)$ time.
\end{lemma}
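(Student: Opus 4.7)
The plan is to populate the type 1 and 2 graphs in two stages, storing each graph's edge set in a balanced search tree indexed by the graph's identifier in $D$, so that every insertion costs $\cO(\log n)$.

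Stage 1 (non-special edges): I would extend the iteration used in the preceding lemma rather than re-doing it. For every $u\in[n]$ with $\level_{\Tone}(u)=\level_{\Ttwo}(u)$ and $\Tone|u\equiv\Ttwo|u$, that procedure already enumerates every ancestor pair $(z,w)$ on a common level with $\Tone|z\equiv\Ttwo|w$ and either $\head_{\Tone}(z)=z$ or $\head_{\Ttwo}(w)=w$, in $\cO(\log n)$ time per $u$. Whenever $p_{\Tone}(z)$ and $p_{\Ttwo}(w)$ both exist and satisfy $\Tone|p_{\Tone}(z)\equiv\Ttwo|p_{\Ttwo}(w)$, the pair $(z,w)$ is a non-special edge of the graph $G(p_{\Tone}(z),p_{\Ttwo}(w))$, so I would insert it into that graph's edge dictionary in $\cO(\log n)$ additional time. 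By Lemma~\ref{lem:nonspecial}, the total weight of non-special edges is $\cO(n\log n)$, and since each non-special edge has weight at least $1$ the number of distinct non-special edges is also $\cO(n\log n)$; repeated insertions arising from different conserved labels $u$ collapse in the dictionary, so the stage runs in $\cO(n\log^2 n)$ time.

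Stage 2 (special edges): I would traverse every graph $G(u,v)$ currently stored in $D$, which by construction enumerates all graphs of type 1 and 2. Using the heavy path decomposition precomputed in linear time, I can retrieve in constant time the heavy child $u'$ of $u$ in $\Tone$ and the heavy child $v'$ of $v$ in $\Ttwo$. The tie-breaking rule adopted when building the decomposition guarantees $\Tone|u'\equiv\Ttwo|v'$, so $(u',v')$ is a legitimate candidate special edge and I insert it into the edge dictionary of $G(u,v)$ in $\cO(\log n)$ time (should the bottom-up computation later assign weight $0$ to this edge, the edge simply does not contribute to the matching). Because every type 1 graph hosts at least one non-special edge, the number of type 1 graphs is at most the number of non-special edges, i.e.\ $\cO(n\log n)$, and there are at most $n$ type 2 graphs since each is indexed by a single label $u$ with $\Tone|u\equiv\Ttwo|u$. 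This stage therefore also runs in $\cO(n\log^2 n)$ time.

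The only real obstacle I anticipate is the accounting: confirming that the total number of edge insertions is $\cO(n\log n)$, so that the $\cO(\log n)$ per-insertion overhead yields the claimed bound. The non-special count follows directly from Lemma~\ref{lem:nonspecial} after observing that edge weights are integral and at least $1$; the special count follows because type 1 graphs are charged against their non-special edges while type 2 graphs are charged against individual labels. Everything else is routine bookkeeping enabled by the heavy path decomposition and the constant-time heavy-child lookup.
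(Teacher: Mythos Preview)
Your proposal is correct and follows essentially the same approach as the paper: the paper likewise iterates over every conserved label $u$ and enumerates the $\cO(\log n)$ head-aligned ancestor pairs $(z,w)$ to insert the non-special edges into $G(p_{\Tone}(z),p_{\Ttwo}(w))$, and separately inserts the special edge $(u',v')$ given by the heavy children into each stored type~1 or~2 graph. Your accounting of the number of graphs (type~1 charged to non-special edges, type~2 charged to labels) is a bit more explicit than the paper's, but the method and the $\cO(n\log^2 n)$ bound are obtained in the same way.
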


\begin{proof}
For each such graph $G(u,v)$ such that none of $u,v$
is a leaf, let $u'$ be the unique heavy child of $u$, and $v'$ be the unique heavy child of $v$. We add the special
edge $(u',v')$ to $G(u,v)$. To find the non-special edges, we again consider every $u\in [n]$ such that
$\level_{\Tone}(u)=\level_{\Ttwo}(u)$ and $\Tone|u \equiv \Ttwo|u$: we iterate over the ancestors $z$ of $u$ in $\Tone$
and $w$ of $u$ in $\Ttwo$ such that $\level_{\Tone}(z)=\level_{\Ttwo}(w)$, $T_{1}|z \equiv T_{2}|w$ and either $\head_{\Tone}(z)=z$ or $\head_{\Ttwo}(w)=w$, and if additionally $T_{1}|p_{\Tone}(z) \equiv T_{2}|p_{\Ttwo}(w)$ then add a non-special edge
$(z,w)$ to $G(p_{\Tone}(z),p_{\Ttwo}(w))$ .
This takes $\cO(n\log^{2}n)$ time overall.
\end{proof}

\paragraph{Processing the Graphs of Type 1 and 2.}\label{subsub:processing}
Having constructed the graphs of type 1 and 2 in $\cO(n\log^{2}n)$ time, we process them level by level.
Consider $G(u,v)$: for each of its edges $(u',v')$ corresponding to $u'\in\children_{T_{1}}(u)$
and $v'\in\children_{T_{2}}(v)$, we need to extract its weight $\gamma(u',v')$. If $G(u',v')$ is of type 1 or 2,
the graph
can be extracted from the dictionary in $\cO(\log n)$ time. Otherwise,
$G(u',v')$ is of type 3 and we need to make up for not having processed such graphs.

To this aim, we associate a sorted list of levels with each pair of heavy paths of $\Tone$ and $\Ttwo$.
The lists are stored in a dictionary indexed by the heads of the heavy paths.
For every $u,v\in[n]$ such that $G(u,v)$ is of type 1 or 2, we append
the levels of $u$ and $v$ to the lists associated with the respective heavy paths.
The lists can be constructed in $\cO(n\log^{2}n)$ time by processing the graphs level by level, and allow us to efficiently use the following lemma.

\begin{lemma}
\label{lem:type3}
Consider $u,v\in [n]$ such that $\level_{\Tone}(u)=\level_{\Ttwo}(v)$ and $\Tone|u\equiv \Ttwo|v$,
but $G(u,v)$ is of type 3. Either both $u$ and $v$ are leaves and $\gamma(u,v)=0$,
or the heavy child of $u$ is $u'$, the heavy child of $v$ is $v'$, and $\gamma(u,v)=\gamma(u',v')$.
\end{lemma}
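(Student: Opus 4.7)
The plan is to unwind the definition of type 3 and plug it into the recursion for $\gamma(u,v)$ given in Equation (\ref{eq:recursion}). Being of type 3 tells us two things: $\Gamma(u,v)=0$ (so $u\neq v$, and the additive correction vanishes) and $G(u,v)$ has no non-special edges, so the only edge $G(u,v)$ could possibly contain is the special one between the heavy children of $u$ and $v$.

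First I would handle the leaf case. Since $\Tone|u\equiv\Ttwo|v$, either both $u$ and $v$ are leaves or both have children. If both are leaves, then $\children_{\Tone}(u)$ and $\children_{\Ttwo}(v)$ are empty, so $G(u,v)$ has no vertices or edges, which gives $\cM(G(u,v))=0$. Combined with $\Gamma(u,v)=0$ this yields $\gamma(u,v)=0$ as claimed.

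Next I would handle the non-leaf case. Let $u'$ be the heavy child of $u$ in $\Tone$ and $v'$ the heavy child of $v$ in $\Ttwo$. Here I invoke the tie-breaking property of the heavy-path decomposition described just before Lemma~\ref{lem:nonspecial}: because $\Tone|u\equiv\Ttwo|v$, the choice of heavy children was made consistently for the whole isomorphism class, so $\Tone|u'\equiv\Ttwo|v'$. Hence $(u',v')$ is precisely the special edge of $G(u,v)$, and by type 3 it is the \emph{only} possible edge. By definition of $\cG(\Tone,\Ttwo)$, this edge is actually present iff $\gamma(u',v')>0$. If it is present, its weight is $\gamma(u',v')$ and any matching uses at most this one edge, so $\cM(G(u,v))=\gamma(u',v')$; if it is absent, then $\gamma(u',v')=0$ and $\cM(G(u,v))=0=\gamma(u',v')$ as well. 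Substituting into Equation (\ref{eq:recursion}) gives $\gamma(u,v)=\cM(G(u,v))+\Gamma(u,v)=\gamma(u',v')+0=\gamma(u',v')$.

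There is no real obstacle here; the only delicate point is justifying that $(u',v')$ is genuinely the \emph{unique} candidate edge. That relies on two facts that are already in the excerpt: the consistent tie-breaking rule, which guarantees isomorphism of the heavy-child subtrees (so the special edge is well-defined and lives in $G(u,v)$), and the characterisation of non-special edges as those whose endpoints leave a heavy path, which together with the type-3 hypothesis forbids any other edge. Once these are stated, the computation of $\cM(G(u,v))$ is immediate.
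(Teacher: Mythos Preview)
Your proof is correct and follows essentially the same route as the paper: deduce $u\neq v$ from $\Gamma(u,v)=0$, split on the leaf case using $\Tone|u\equiv\Ttwo|v$, and in the non-leaf case use the consistent tie-breaking to conclude that the special edge $(u',v')$ is the only possible edge of $G(u,v)$, so Equation~(\ref{eq:recursion}) collapses to $\gamma(u,v)=\gamma(u',v')$. You are in fact slightly more careful than the paper in explicitly handling the sub-case where $\gamma(u',v')=0$ and hence the special edge is absent.
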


\begin{proof}
First observe that $u\neq v$, as otherwise $G(u,v)$ would be of type 2.
Becase $\Tone|u\equiv \Ttwo|v$, either both $u$ and $v$ are leaves or none of them is a leaf.
In the former case, $G(u,v)$ is empty and $\gamma(u,v)=0$.
By how we resolve ties in the heavy path decomposition, in the latter case we have $\Tone|u' \equiv \Ttwo|v'$,
where $u'$ is the heavy child of $u$ and $v'$ is the heavy child of $v$.
$G(u,v)$ consists of the unique special edge corresponding to the heavy child $u'$ of $u$
and $v'$ of $v$, so $\cM(G(u,v))$ is equal to the
cost of the special edge, and by (\ref{eq:recursion}) we obtain that $\gamma(u,v)=\gamma(u',v')$.
\end{proof}

Given $u,v\in [n]$ such that $\level_{\Tone}(u)=\level_{\Ttwo}(v)=\ell$ and $\Tone|u\equiv \Ttwo|v$,
we extract $\gamma(u,v)$ by accessing the sorted list associated with the heavy paths of $u$ and $v$: we binary search for the smallest level $\ell'\geq \ell$ such that the heavy paths of $u$ and $v$ respectively contain a node
$u'$ and $v'$, both on level $\ell'$, with $G(u',v')$ of type
1 or 2. Then Lemma~\ref{lem:type3}, together with the fact that in our heavy path decomposition the subtrees rooted at
the heavy children of two nodes with isomorphic subtrees are also isomorphic, implies that
$\gamma(u,v)=\gamma(u',v')$.

It remains to describe how to compute $\cM(G(u,v))$ for every graph $G(u,v)$ of type 1 and 2. We could
have used any maximum weight matching algorithm, but this would result in a higher running time.
Our goal is to plug in a maximum matching algorithm. This seems problematic as $G(u,v)$ is a weighted
bipartite graph, but we will show that maximum weight matching can be reduced to multiple instances
of maximum matching. However, bounding the overall running time will require bounding the total weight
of all edges belonging to graphs of type 1 and 2. By Lemma~\ref{lem:nonspecial} we already know that the total
weight of all non-special edges is $\cO(n\log n)$, but such bound doesn't hold for the special edges.
Therefore, we proceed as follows. Let $u'$ be the heavy child of $u$ and $v'$ be the heavy child of $v$.
We construct $G'(u,v)$ by removing the special edge from $G(u,v)$.
We also construct $G''(u,v)$ from $G(u,v)$ by removing all the edges incident to $u'$ and $v'$ (see Figure~\ref{fig:graphs} for an example). 
Equation~(\ref{eq:recursion}) can then be rewritten as follows:
\begin{equation}
\label{eq:weight}
    \gamma(u,v)=\max\{\cM(G'(u,v)), \cM(G''(u,v))+\gamma(u',v') \} + \Gamma(u,v) .
\end{equation}
This is because a maximum weight matching in $G(u,v)$ either includes the special edge $(u',v')$, implying that no other edges incident to $u'$ and $v'$ can be part of the matching and thus $\cM(G(u,v))=\cM(G''(u,v))+\gamma(u',v')$, or it does not include it, thus $\cM(G(u,v))=\cM(G'(u,v))$.
Since the graphs $G'(u,v)$ and $G''(u,v)$ contain only non-special edges,
the overall weight of all edges in the obtained instances of maximum weight matching is $\cO(n\log n)$.

We already know that constructing all the relevant graphs takes $\cO(n\log^{2}n)$ time. It remains to
analyze the time to calculate the maximum weight matching in every $G'(u,v)$ and $G''(u,v)$.
We first present a preliminary lemma that connects the complexity of calculating the maximum weight matching
in a weighted bipartite graph to the complexity of calculating the maximum matching in an unweighted bipartite graph.

\begin{lemma}[\cite{decomposition}]
\label{lem:decomposition}
Let $G$ be a weighted bipartite graph, and let $N$ be the total weight of all the edges of $G$. Calculating the
maximum weight matching in $G$ can be reduced in $\cO(N)$ time to multiple instances of calculating the
maximum matching in an unweighted bipartite graph, in such a way that the total number of edges in all such graphs is at most $N$.
\end{lemma}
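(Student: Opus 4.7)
The plan is to reduce to the integer-weights case and then use a threshold-decomposition of the weighted bipartite matching, as in the cited reference. First, I would normalize by assuming all edge weights are positive integers: rational weights can be rescaled by a common denominator and zero-weight edges discarded, without affecting $N$. In our application all the weights $\gamma(u',v')$ arising from recursion~(\ref{eq:recursion}) are already non-negative integers, so this normalization is free.

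Next I would construct, for every integer threshold $t=1,2,\ldots,W$ with $W=\max_e w_e$, the unweighted bipartite graph $H_t$ consisting of all edges of $G$ of weight at least $t$. Since an edge of weight $w_e$ appears in exactly $w_e$ of the $H_t$, the total number of edges across all the instances is
\[
\sum_{t=1}^{W}|E(H_t)| \;=\; \sum_{e\in E(G)} w_e \;=\; N,
\]
and the whole family can clearly be produced in $\cO(N)$ time (e.g., by bucket-sorting edges by weight and then, in decreasing order of $t$, emitting the edges whose weight is exactly $t$ into the current output while carrying over edges of higher weight).

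The third step is to recover the maximum weight matching from maximum cardinality matchings in the $H_t$. The key identity is that for any matching $M$ of $G$,
\[
\operatorname{weight}(M)=\sum_{t=1}^{W}\bigl|M\cap E(H_t)\bigr|,
\]
with each $M\cap E(H_t)$ a matching in the unweighted graph $H_t$. Consequently the maximum weight matching of $G$ is obtained by maximizing this sum over chains $M_1\supseteq M_2\supseteq\cdots\supseteq M_W$ where $M_t$ is a matching of $H_t$. This chain can be computed by processing $t=W,W-1,\ldots,1$ and, at each step, invoking a max-matching computation on $H_t$, seeded by the matching produced at the previous step; equivalently, this is a sequence of unweighted maximum matching problems whose edge sets are exactly $E(H_1),\ldots,E(H_W)$, so their total edge count is $N$.

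The step I expect to be the main obstacle is the correctness of the aggregation. Summing independent maxima across the $H_t$ can overcount, while a crude top-down nesting can undercount, so a simple ``one-shot'' reduction does not work. The cited reference handles this delicacy by specifying how the successive max-matching calls must be coordinated (essentially through augmenting paths that respect the threshold structure), and this is precisely the content that we invoke as a black box. Once correctness is granted, the $\cO(N)$ bound on construction and the $\leq N$ bound on the total number of edges in the unweighted instances are immediate from the counting identity above.
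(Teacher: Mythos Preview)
Your proposal and the paper's proof land in the same place: both ultimately invoke the decomposition theorem of Kao, Lam, Sung, and Ting as a black box, observe that the total number of edges over all unweighted instances is $\sum_i m_i = N$, and note that the reduction itself runs in $\cO(N)$ time by bucketing edges by weight. In that sense your approach is correct and matches the paper.

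One caveat on the extra exposition you added. The static threshold graphs $H_t=\{e:w_e\ge t\}$ are a useful mental picture, and the identity $\operatorname{weight}(M)=\sum_t |M\cap E(H_t)|$ is of course true, but this is \emph{not} the mechanism of the Kao--Lam--Sung--Ting decomposition. In their algorithm each successive unweighted instance is constructed adaptively from the maximum matching found in the previous one (edge weights incident to matched vertices are decremented), so the instances are not the predetermined $H_1,\ldots,H_W$. You yourself flag that neither independent maxima nor greedy nesting over the $H_t$ is correct, and you then defer to the citation---so there is no error in your argument, but the paragraph describing the $H_t$ and the seeded chain is suggestive rather than an accurate description of what the cited theorem actually does. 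Since the paper's proof simply cites the theorem without attempting to unpack it, you could safely drop that middle portion and lose nothing.
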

\begin{proof}
Using the decomposition theorem of Kao, Lam, Sung, and Ting~\cite{decomposition}, we can reduce computing
the maximum weight matching in a weighted bipartite graph such that the total weight of all edges is $N$ to
multiple instances of calculating the largest cardinality matching in an unweighted bipartite graph. The total
number of edges in all unweighted bipartite graphs is $\sum_{i}m_{i}=N$ and the reduction can be implemented
in $\cO(N)$ time by maintaining a list of edges with weight $w$, for every $w=1,2,\ldots,N$.
\end{proof}

\begin{theorem}
\label{thm:tomatching}
Let $f(m)$ be the complexity of calculating the maximum matching in an unweighted bipartite graph on $m$ edges,
and let $f(m)/m$ be nondecreasing.
The permutation distance can be computed in $\tilde{\cO}(f(n))$ time.
\end{theorem}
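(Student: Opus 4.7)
The plan is to process all graphs of type 1 and 2 in a bottom-up sweep by level, computing $\gamma(u,v)$ for each via Equation~(\ref{eq:weight}) and invoking Lemma~\ref{lem:decomposition} to reduce each weighted matching computation to unweighted matching instances. The overall time is then dominated by (a) the $\cO(n\log^{2}n)$ cost of constructing the graphs and extracting edge weights, and (b) the total cost of solving all unweighted matching subproblems, which we will bound by $\tilde{\cO}(f(n))$ using the monotonicity assumption on $f(m)/m$.

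First I would carry out the level-by-level sweep. Starting from the deepest level, for each $G(u,v)$ of type 1 or 2, I populate the weights of its edges: the weight of an edge $(u',v')$ is $\gamma(u',v')$, and this value is available because $u'$ and $v'$ lie at a strictly deeper level. To retrieve $\gamma(u',v')$ I consult the dictionary of type 1/2 graphs, and if $G(u',v')$ is of type 3 I appeal to Lemma~\ref{lem:type3}: I binary-search in the sorted list associated with the pair of heavy paths containing $u'$ and $v'$ for the deepest level $\ell'\geq\level_{\Tone}(u')$ at which a type 1 or 2 graph $G(u'',v'')$ exists on those two heavy paths, and then $\gamma(u',v')=\gamma(u'',v'')$, already computed. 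Each lookup costs $\cO(\log n)$, and since the total number of edges across all type 1 and 2 graphs is $\cO(n\log n)$ (by Lemma~\ref{lem:nonspecial}, plus one special edge per graph, of which there are $\cO(n)$), this stage is $\cO(n\log^{2}n)$.

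Next, for each $G(u,v)$ of type 1 or 2, I form $G'(u,v)$ and $G''(u,v)$ and feed each to the reduction of Lemma~\ref{lem:decomposition}. Since $G'(u,v)$ and $G''(u,v)$ contain only non-special edges, Lemma~\ref{lem:nonspecial} bounds the total weight $N$ of edges across all such instances by $\cO(n\log n)$. By Lemma~\ref{lem:decomposition}, the reduction produces a collection of unweighted bipartite matching instances with edge counts $m_1, m_2, \ldots$ satisfying $\sum_{i} m_{i} \leq N = \cO(n\log n)$, and costs $\cO(N)$ time to set up. I then combine the values returned via (\ref{eq:weight}) to obtain $\gamma(u,v)$, and at the very top I read off $\gamma(\Tone,\Ttwo)$; the permutation distance is $n-\gamma(\Tone,\Ttwo)$.

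The main obstacle, and the only non-routine step, is arguing that the aggregate matching cost $\sum_{i} f(m_{i})$ is $\tilde{\cO}(f(n))$. This is where the hypothesis that $f(m)/m$ is nondecreasing is essential: for every $i$ we have $f(m_{i})/m_{i}\leq f(N)/N$, hence
\[
\sum_{i} f(m_{i}) \;\leq\; \sum_{i} m_{i}\cdot\frac{f(N)}{N} \;\leq\; f(N) \;=\; f(\cO(n\log n)).
\]
Since we only need the bound up to polylogarithmic factors, and for any reasonable $f$ (e.g.\ $f(m)=m^{4/3+o(1)}$) we have $f(\cO(n\log n))=\tilde{\cO}(f(n))$, this yields a total matching cost of $\tilde{\cO}(f(n))$. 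Adding the $\cO(n\log^{2}n)$ preprocessing and dictionary-access overhead preserves the bound, completing the proof.
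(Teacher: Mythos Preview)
Your overall structure matches the paper's proof, but there is a genuine gap in the final estimate. You bound $\sum_i f(m_i)\le f(N)$ with $N=\cO(n\log n)$ and then assert $f(\cO(n\log n))=\tilde\cO(f(n))$ ``for any reasonable $f$''. This step does \emph{not} follow from the only hypothesis available, namely that $f(m)/m$ is nondecreasing. That hypothesis gives a \emph{lower} bound $f(cn\log n)\ge (c\log n)\,f(n)$, not an upper bound; and functions such as $f(m)=m^{\log m}$ satisfy the monotonicity of $f(m)/m$ yet have $f(n\log n)/f(n)$ super-polylogarithmic.

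The paper closes this gap by making one more observation you omitted: for each individual weighted graph $G'(u,v)$ or $G''(u,v)$, the total edge weight $N_i$ is at most $n$. This holds because any label $w\in[n]$ can contribute at most $1$ to the total weight inside a single $G(u,v)$ (it determines a unique pair of children of $u$ and $v$). Hence each unweighted instance produced by Lemma~\ref{lem:decomposition} satisfies $m_{i,j}\le N_i\le n$, and one can bound
\[
\sum_{i,j} f(m_{i,j}) \;=\; \sum_{i,j} m_{i,j}\cdot\frac{f(m_{i,j})}{m_{i,j}} \;\le\; \frac{f(n)}{n}\sum_{i,j} m_{i,j} \;=\; \cO(f(n)\log n),
\]
using only the stated monotonicity of $f(m)/m$. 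With this one-line fix your argument becomes complete and coincides with the paper's. (A minor aside: your claim that there are $\cO(n)$ type~1/2 graphs is not justified; the correct bound is $\cO(n\log n)$, but this does not affect the $\cO(n\log n)$ total edge count.)
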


\begin{proof}
The total number of edges in all constructed graphs is $\cO(n\log n)$, and the total time to construct the relevant
graphs and extract the costs of their edges is $\cO(n\log^{2}n)$.
Thus, the total running time is $\cO(n\log^{2}n)$ plus the time to compute the maximum weight matching in every graph of type 1
and type 2. Let $N_{i}$ be the total weight of all non-special edges in the $i$-th of these graphs.
By Lemma~\ref{lem:nonspecial}, $\sum_{i} N_{i} = \cO(n\log n)$.
Additionally, $N_{i} \leq n$. 
Let $m_{i,j}$ be the number of edges in the $j$-th instance of unweighted bipartite matching for the $i$-th graph.
By Lemma~\ref{lem:decomposition}, the overall time is hence $\sum_{i,j} f(m_{i,j})$, where
$\sum_{i,j} m_{i,j} \leq  \sum_{i} N_{i} = \cO(n\log n)$ and $m_{i,j} \leq N_{i} \leq n$.
We upper bound $\sum_{i,j}f(m_{i,j})$ using the assumption that $f(m)/m$ is nondecreasing as follows:
\[
\sum_{i,j} f(m_{i,j}) = \sum_{i,j} m_{i,j} \cdot f(m_{i,j})/m_{i,j} \leq \sum_{i,j} m_{i,j} \cdot f(n)/n = \cO(f(n)\log n). \qedhere
\]
\end{proof}

\begin{corollary}
The permutation distance can be computed in $\tilde{\cO}(n^{4/3+o(1)})$ time.
\end{corollary}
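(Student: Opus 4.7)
The plan is to obtain this corollary as a direct instantiation of Theorem~\ref{thm:tomatching}, so the proof is essentially a one-line calculation once we verify that the hypotheses apply to the Liu--Sidford matching algorithm~\cite{liu2020faster}.

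First, I would recall that Liu and Sidford show that maximum matching in an unweighted bipartite graph on $m$ edges can be computed in $f(m) = \cO(m^{4/3+o(1)})$ time. To apply Theorem~\ref{thm:tomatching}, I need $f(m)/m$ to be nondecreasing. This is immediate since $f(m)/m = \cO(m^{1/3+o(1)})$, a function of the form $m^{\alpha}$ with exponent $\alpha > 0$ (absorbing the $o(1)$ term into a slowly growing factor), which is clearly nondecreasing in $m$.

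Next, I would simply plug $f(m) = \cO(m^{4/3+o(1)})$ into the bound $\tilde{\cO}(f(n))$ provided by Theorem~\ref{thm:tomatching}, yielding a total running time of $\tilde{\cO}(n^{4/3+o(1)})$ for computing the permutation distance on two trees with $n$ nodes. The polylogarithmic factors coming from the heavy-path decomposition, from maintaining the dictionaries of distance graphs, and from the decomposition of weighted matching into unweighted matching instances (Lemma~\ref{lem:decomposition}) all get absorbed into the $\tilde{\cO}$ notation.

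There is essentially no obstacle here beyond bookkeeping: the whole corollary is a substitution into Theorem~\ref{thm:tomatching}. The only subtlety worth noting, to keep the statement honest, is that the $o(1)$ in the exponent of the Liu--Sidford algorithm propagates unchanged through the sum $\sum_{i,j} f(m_{i,j}) \leq \cO(f(n)\log n)$ appearing in the proof of Theorem~\ref{thm:tomatching}, because the bound is derived by the pointwise inequality $f(m_{i,j})/m_{i,j} \leq f(n)/n$ together with $\sum_{i,j} m_{i,j} = \cO(n\log n)$; no further amplification of the sub-polynomial factor occurs.
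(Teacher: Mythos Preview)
Your proposal is correct and matches the paper's approach: the corollary is stated without proof immediately after Theorem~\ref{thm:tomatching}, since it is obtained by plugging in the Liu--Sidford bound $f(m)=\tilde\cO(m^{4/3+o(1)})$ exactly as you describe. Your verification that $f(m)/m$ is nondecreasing and your remark on the propagation of the $o(1)$ exponent are appropriate bookkeeping.
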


\subsection{Reduction from Bipartite Maximum Matching}
\label{sec:from}

We complement the algorithm described in Subection~\ref{sub:reduction} with a reduction from
bipartite maximum matching to computing the permutation distance: see Figure~\ref{fig:reduction} for an example.

\begin{theorem}
\label{thm:frommatching}
Given an unweighted bipartite graph on $m$ edges, we can construct in $\cO(m)$ time two trees
with permutation distance equal to the cardinality of the maximum matching.
\end{theorem}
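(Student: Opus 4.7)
The plan is to encode the input bipartite graph $G=(A\cup B,E)$ inside the distance graph at the root of the two constructed trees, so that computing the permutation distance reduces to computing a single bipartite maximum matching on $G$. Concretely, I want the top-level distance graph $G(r,r)\in\cG(T_1,T_2)$ to be a copy of $G$ with every edge carrying weight $1$; then the recursion~\eqref{eq:recursion} applied at the root gives $\gamma(T_1,T_2)=\cM(G(r,r))+1=|M^\ast(G)|+1$, and $d_\pi(T_1,T_2)=n-1-|M^\ast(G)|$ recovers the maximum matching size via an affine transformation whose parameters depend only on the input size (a straightforward choice of the label set lets one then tune this into the exact equality stated in the theorem).

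I would first pad $G$ in $\cO(m)$ time with isolated vertices so that $|A|=|B|=p$ and every vertex has at least one non-neighbour, and fix a uniform $D$ strictly larger than every vertex degree. Both trees share a common root labelled $r$. In $T_1$ the root has $p$ children $u_1,\dots,u_p$ (one per vertex of $A$), each with $D$ leaves; symmetrically $T_2$ has children $v_1,\dots,v_p$. Every subtree $T_1|u_i$ is then a star with $D$ leaves, and so is every $T_2|v_j$, making them pairwise isomorphic, so every pair $(u_i,v_j)$ is a candidate edge of $G(r,r)$. The delicate ingredient is the labelling: I reserve three disjoint label pools in $[n]$, namely $\{\alpha_1,\dots,\alpha_p\}$, $\{\beta_1,\dots,\beta_p\}$, and $m$ edge labels $\{\ell_e:e\in E\}$; I assign $\alpha_i$ to $u_i$ in $T_1$ and $\beta_j$ to $v_j$ in $T_2$; for every $e=(a_i,b_j)\in E$ I place $\ell_e$ as a leaf under $u_i$ in $T_1$ and also as a leaf under $v_j$ in $T_2$; and I fill the remaining $D-\deg(a_i)$ leaves under $u_i$ in $T_1$ with $\beta_j$-labels of non-neighbours $b_j\notin N(a_i)$, symmetrically in $T_2$.

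Because $\alpha_i$ sits at depth $1$ in $T_1$ but at depth $2$ in $T_2$, and symmetrically for $\beta_j$, no depth-$1$ label can ever be conserved by any $\mu\in\cI(T_1,T_2)$. Hence for any pair $(u_i,v_j)$ the value $\gamma(u_i,v_j)$ is exactly the number of labels shared between the leaf sets of $u_i$ and $v_j$, and by design this is $1$ iff $(a_i,b_j)\in E$ and $0$ otherwise, giving $G(r,r)=G$ with unit edge-weights as required. The main obstacle, and the only substantial technical content, is precisely this label design: one has to verify that the padding of $G$ leaves enough non-neighbour slots so that each $\alpha_i$ and each $\beta_j$ is placed exactly once in the opposite tree (so that both label multisets coincide with $[n]$), and that using only opposite-side child labels as fillers rules out any accidental shared label between $T_1|u_i$ and $T_2|v_j$ for non-edges $(a_i,b_j)\notin E$. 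Both checks are local and can be done in $\cO(m)$ time, yielding the claimed linear-time reduction with trees of $\cO(m)$ nodes.
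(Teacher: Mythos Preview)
Your high-level plan---encode $G$ as the top-level distance graph $G(r,r)$ so that $\gamma(r,r)=|M^\ast(G)|+1$---is exactly the paper's strategy. The difference is in how the two trees are labelled, and this is where your sketch has a genuine gap. You propose to fill the $D-\deg(a_i)$ spare leaf slots under each $u_i$ in $T_1$ using only the $p$ labels $\beta_1,\dots,\beta_p$, each exactly once. Counting, the depth-$2$ layer of $T_1$ then has $pD$ leaves carrying $m$ edge labels and $p$ filler labels, forcing $pD=m+p$, i.e.\ $D=1+m/p$. But you also require $D$ to strictly exceed the maximum degree $\Delta$, so $\Delta<1+m/p$, which says $\Delta\le m/p$; since $m/p$ is the average degree, this is possible only when the padded graph is regular. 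Padding with isolated vertices drives $m/p$ toward $0$ and makes the constraint harder, not easier, and padding with edges changes the matching number. So the ``local check'' you defer is not local at all: with only $2p$ non-root, non-edge labels available, there simply aren't enough fillers unless $G$ is regular, and forcing regularity destroys the quantity you want to recover. Even granting the correct global count, you would still need to exhibit a feasible assignment (each $\beta_j$ under some non-neighbour $u_i$, with exactly $D-\deg(a_i)$ fillers at $u_i$), which is a degree-constrained bipartite subgraph problem, not an $\cO(m)$ local computation.

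The paper sidesteps both issues by first reducing every vertex degree to at most $3$ in $\cO(m)$ time (tracking the additive shift in the matching number), then padding each $u_i,v_j$ to degree exactly $3$, and finally dumping all leftover labels---including the $\alpha_i$'s, $\beta_j$'s, and padding-leaf labels---as \emph{extra leaves attached directly to the root}. Because the root then has far more than $3$ children, any isomorphism must send extra leaves to extra leaves and $u_i$'s to $v_j$'s; the extra leaves are labelled so that none of them can be conserved, and the rest of the argument is your own. The degree-reduction step is what makes the label budget finite and the assignment trivially feasible; without it (or an analogous device introducing a fourth label pool absorbed at the root), your construction cannot be completed.
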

\begin{proof}
We first modify the graph so that the degree of every node is at most 3. This can be ensured in $\cO(m)$
time by repeating the following transformation: take a node $u$ with neighbours $v_{1},v_{2},\ldots,v_{k}$, $k\geq 4$.
Replace $u$ with $u'$ and $u''$ both connected to a new node $v$, connect $u'$ to $v_{1},v_{2},\ldots,v_{k-2}$ and
$u''$ to $v_{k-1},v_{k}$. It can be verified that the cardinality of the maximum matching in the new graph
is equal to that in the original graph increased by 1. By storing, for every node, the incident edges in a linked list, we can
implement a single step of this transformation in constant time, and there are at most $m$ steps.

We will now first construct two unlabelled trees and then explicitly assign appropriate labels to their nodes.
Without loss of generality, let the nodes of the graph be $u_{1},u_{2},\ldots,u_{m}$ and $v_{1},v_{2},\ldots,v_{m}$.
In the first tree we create $m$ nodes, labelled with $u_{1},u_{2},\ldots,u_{m}$,
connected to a common unlabelled root. In the second tree we do the same with nodes $v_{1},v_{2},\ldots,v_{m}$. Then, for every edge $(u_{i},v_{j})$ of the graph, we attach a new leaf  to $u_{i}$ in the first tree and to $v_{j}$ in the
second tree, and assign the same label to both of them. Finally, we attach enough unlabelled leaves to every $u_{i}$ and $v_{j}$ to make their degrees all equal to 3. 
To make both trees fully-labelled on the same set of labels, we further attach $1+m+3m-m=3m+1$ \emph{extra leaves} to the roots of both trees. 
For every unlabelled leaf attached to $u_{1},u_{2},\ldots,u_{m}$ of the first tree, we choose an unlabelled extra leaf of the second tree, and assign the same label to both of them.
We then assign the same label to the root of the first tree and an extra leaf of the second tree, and label the last $m$ extra leaves of the second tree with $u_{1},u_{2},\ldots,u_{m}$.
We finally swap the trees and repeat the same procedure: see Figure~\ref{fig:reduction} for an example. 

The permutation distance between the two trees is equal
to the cardinality of the maximum matching. Indeed, the trees are clearly isomorphic; moreover, any isomorphism must match extra leaves with extra leaves, and every $u_{i}$ to a $v_{\pi(j)}$, for some permutation $\pi$ on $[m]$. The extra leaves do not
contribute to the number of conserved nodes, while $u_{i}$ and $v_{\pi(j)}$ contribute 1 if and only if $(u_{i},v_{\pi(j)})$ was
an edge in the original graph. Thus, the distance is equal to the maximum over all permutations $\pi$
of the number of edges $(u_{i},v_{\pi(j)})$. This in turn is equal to the cardinality of the maximum matching in the original
graph.
\end{proof}

\begin{figure}
\centering
    \begin{subfigure}[b]{0.14\textwidth}
        \includegraphics[width=\linewidth]{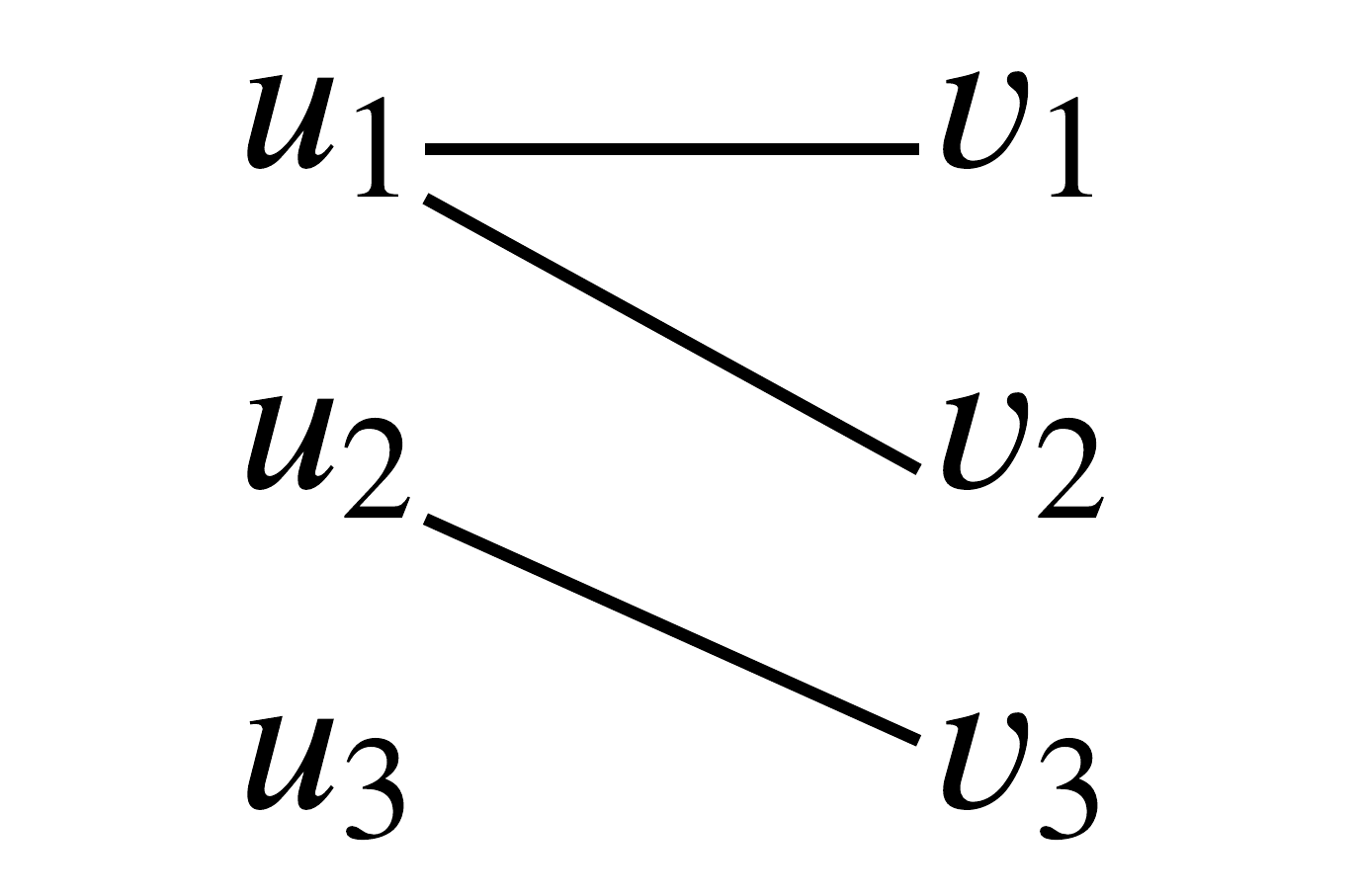}
        \vspace{-3mm}
    \end{subfigure}
    \begin{subfigure}[b]{0.42\textwidth}
        \includegraphics[width=\linewidth]{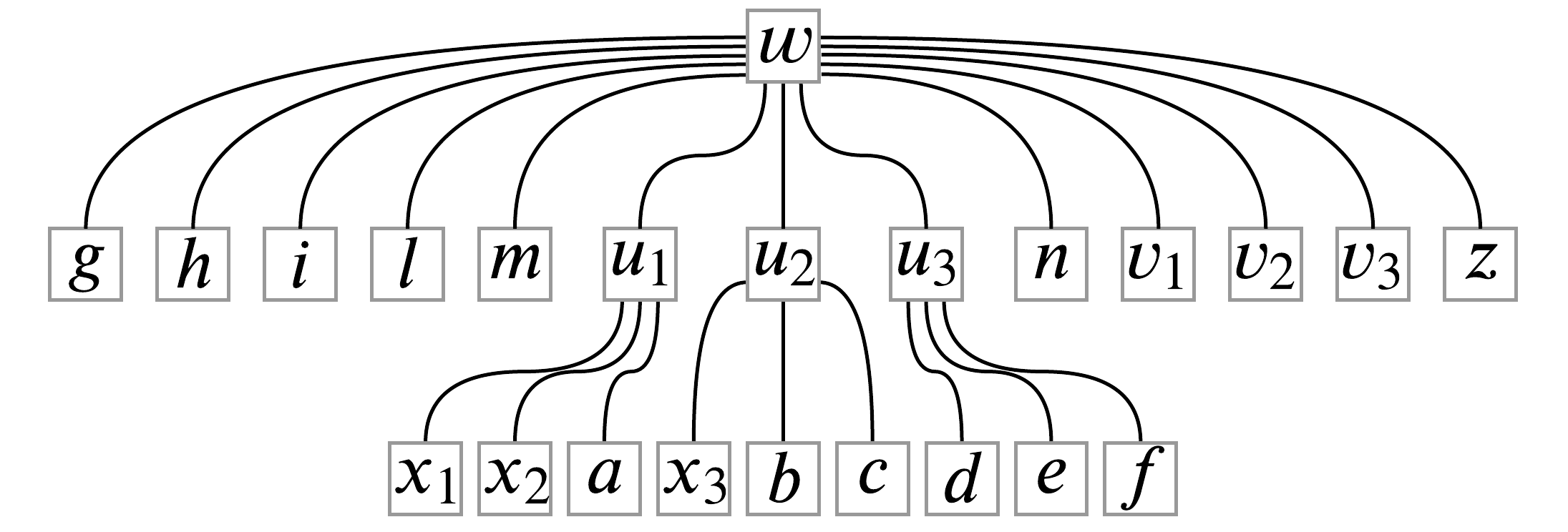}
        \vspace{-3mm}
    \end{subfigure}
    \begin{subfigure}[b]{0.42\textwidth}
        \includegraphics[width=\linewidth]{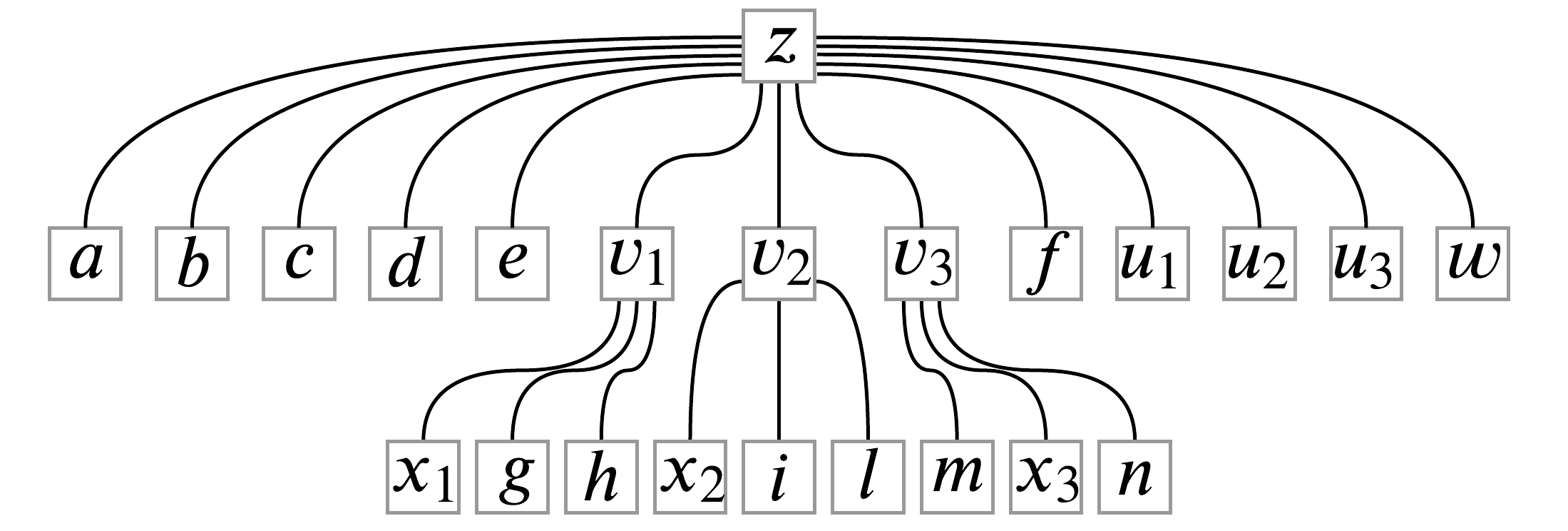}
        \vspace{-3mm}
    \end{subfigure}
    
    \caption{The two trees built for the graph on the left, according to Theorem~\ref{thm:frommatching}.}
    \label{fig:reduction}
\end{figure}

\section{A Constant-Factor Approximation Algorithm for the Rearrangement Distance}
\label{sec:approx}

A linear-time algorithm that, given two trees $\Tone$ and $\Ttwo$,
approximates $d(\Tone,\Ttwo)$ within a constant factor, was known for the case where at least one of the trees is binary~\cite{bernardini2019distance}: here we do not make any assumptions on the degrees.
Throughout this section, we actually consider $\tilde d(\Fone,\Ftwo)$, and show how to approximate it within
a constant factor. This allows us to approximate $d(\Tone,\Ttwo)$ within a constant factor
using the following procedure.
First, we add $n$ leaves $n+1,n+2,\ldots,n$ attached to the (identical) roots of $\Tone$ and $\Ttwo$ to obtain $\Tone'$ and $\Ttwo'$, respectively.
We call the resulting trees \emph{anchored}. Because $\Tone$ and $\Ttwo$ are assumed to have the
same root that cannot be permuted, we have $d(\Tone,\Ttwo)=d(\Tone',\Ttwo')$.
We claim that $\tilde d(\Tone',\Ttwo')=d(\Tone',\Ttwo')$.

\begin{lemma}
\label{lem:atmost}
For any two anchored trees $\Tone$ and $\Ttwo$, $\tilde{d}(\Tone,\Ttwo)= d(\Tone,\Ttwo)$.
\end{lemma}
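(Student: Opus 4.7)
The plan is to prove $\tilde d(\Tone,\Ttwo)=d(\Tone,\Ttwo)$ by two size-preserving conversions between operation sequences. Let $r$ denote the common root of the two anchored trees and, whenever a root-fixing permutation $\pi$ has been applied, write $T:=\pi(\Tone)$.

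For the easy inclusion $\tilde d \le d$, start from an optimal $d$-sequence: a root-fixing $\pi$ followed by link-and-cuts $L_i=(v_i\,|\,u_i\to w_i)$ for $i=1,\ldots,k$; one may assume the $v_i$ are distinct, because composing two link-and-cuts of the same node gives a single link-and-cut. Replace each $L_i$ by its cut part $C_i=(v_i\dagger u_i)$ while keeping $\pi$ unchanged; this yields a forest $\Fone'$ from $\Tone$ of the same total size $|\pi|+k$. I then verify $\Fone'\sim\Ttwo$ node by node: each $v_i$ is a root of $\Fone'$, satisfying condition (ii) of $\sim$; every other node $u$ keeps the same parent as in $T$, both in $\Fone'$ (cuts never alter $u$'s edge) and in $\Ttwo$ (a link-and-cut moves $v_i$ together with its entire subtree), so condition (i) holds at $u$.

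For the harder inclusion $d\le\tilde d$, start from an optimal $\tilde d$-sequence: a root-fixing $\pi$ followed by cuts $C_i=(v_i\dagger u_i)$ whose result $\Fone'$ satisfies $\Fone'\sim\Ttwo$, and set $S:=\{u\neq r : p_T(u)\neq p_{\Ttwo}(u)\}$. By minimality one has $\{v_i\}=S$: any $u\in S$ must be cut (otherwise $\sim$ fails at $u$, since neither $p_{\Fone'}(u)$ nor $p_{\Ttwo}(u)$ is $\bot$), while cutting a node outside $S$ is wasteful. Each $v_i\in S$ is distinct from $r$ because cuts require a parent, so the target parent $w_i:=p_{\Ttwo}(v_i)$ is a well-defined node. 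I then replace each $C_i$ by $L_i=(v_i\,|\,u_i\to w_i)$ and execute the $L_i$'s in order of increasing depth of $v_i$ in $\Ttwo$.

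The main technical step is verifying that this order makes every link-and-cut valid, i.e.\ that $w_i$ is never a descendant of $v_i$ in the current tree. I will establish the invariant: once all $v_j$ of $\Ttwo$-depth at most $d$ have been processed, every node $x$ of $\Ttwo$-depth at most $d$ has its current parent equal to $p_{\Ttwo}(x)$. Indeed, if $x\notin S$ then $x$'s parent was never modified and already equals $p_{\Ttwo}(x)$ by the definition of $S$; if $x\in S$ then $x$ has been processed earlier and its parent was explicitly set to $p_{\Ttwo}(x)$. When it is $v_i$'s turn at depth $d+1$, applying the invariant to $w_i$ (of depth $d$) and iterating along its $\Ttwo$-ancestors shows that the current ancestor chain of $w_i$ coincides with its $\Ttwo$-ancestor chain, which does not contain $v_i$; hence $w_i$ is not a descendant of $v_i$ and $L_i$ is valid. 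After all $L_i$'s have been executed, the invariant at the maximum depth gives that the current tree equals $\Ttwo$, so the sequence $\pi,L_1,\ldots,L_k$ transforms $\Tone$ into $\Ttwo$ with size $|\pi|+k=\tilde d(\Tone,\Ttwo)$. The main obstacle is precisely this depth-based validity argument; the role of the anchoring is mild, only ensuring that the shared root never needs to participate in a cut or link-and-cut so that every target parent is a well-defined node.
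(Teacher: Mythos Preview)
Your argument for $\tilde d\le d$ is essentially the paper's. The gap is in the other direction: you begin the proof of $d\le\tilde d$ by taking ``an optimal $\tilde d$-sequence: a root-fixing $\pi$'', but nothing in the definition of $\tilde d$ forces the permutation to fix $r$. The distance $d$ is defined only over root-fixing permutations, so to convert an optimal $\tilde d$-sequence into a $d$-sequence of the same size you must first argue that some optimal $\tilde d$-sequence fixes $r$. This is precisely where the anchoring hypothesis is used in the paper, and it is not mild: because the $n$ extra leaves $n+1,\ldots,2n$ are attached to $r$ in both trees, any permutation with $\pi(r)\neq r$ forces, for each such leaf $u$, either $\pi(u)\neq u$ or a cut at $u$ (since $p_{\pi(T_1)}(u)=\pi(r)\neq r=p_{T_2}(u)$ and both are defined), giving total size at least $n$; on the other hand $\tilde d(T_1,T_2)<n$ trivially, a contradiction. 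Without this step your argument does not go through.

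Your stated role of anchoring (``ensuring that the shared root never needs to participate in a cut or link-and-cut so that every target parent is a well-defined node'') is not actually a use of anchoring: once you know $\pi$ fixes $r$, your set $S$ excludes $r$ by definition, and then $w_i=p_{T_2}(v_i)$ is well-defined simply because $T_2$ is a tree with root $r$. So you have misattributed the hypothesis and left the genuine obstruction unaddressed.

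As a side remark, your increasing-$T_2$-depth order for the link-and-cuts, together with the invariant you state, is correct and cleanly justified; the paper asserts the opposite (decreasing) order without a detailed invariant, and your version is in fact the right one.
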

\begin{proof}
Consider a sequence $s$ of link-and-cut and permutation operations
that transforms $\Tone$ into $\Ttwo$. We convert it into a sequence $s'$ of cut and permutation operations by simply
replacing every link-and-cut operation $v\,|\,u\rightarrow w$ with a cut operation $(v\dagger u)$. Let $\Tone'$ be the forest
obtained after applying $s'$ on $\Tone$. We claim that $\Tone' \sim \Ttwo$. Consider any $v\in [n]$.
Because we can assume that the permutation operation precedes all the link-and-cut operations in $s$,
if $p_{\Tone'}(v) \neq \bot$ then we must have $p_{\Ttwo}(v)=\bot$ or $p_{\Tone'}(v)=p_{\Ttwo}(v)$, as
$p_{\Tone'}(v)$ is the same as the parent of $v$ after applying $s$ on $\Tone$. This
shows that indeed $\Tone' \sim \Ttwo$, and so $\tilde{d}(\Tone,\Ttwo) \leq d(\Tone,\Ttwo)$.

For the other direction, we use the assumption that $\Tone$ and $\Ttwo$ are anchored trees on $2n$ nodes:
in both trees $r$ is the root and there are $n$ leaves $n+1,n+2,\ldots,n$ attached to $r$.
Observe that $\tilde d(\Tone,\Ttwo)<n$. We claim that an optimal sequence of cut and permutation operations 
doesn't permute $r$. Assume otherwise, then for every $u=n+1,n+2,\ldots,n$ either $u$
is also permuted, or we have a cut operation $(u \dagger r)$, so the size of the sequence must be at least $n$.
Now, let $s$ be an optimal sequence consisting of a permutation $\pi$ and then some cut operations,
and let $\Tone''$ be the tree obtained after applying $s$ on $\Tone$.
We obtain a sequence $s'$ of link-and-cut and permutation operations from $s$ as follows.
For every $v\in [n]$, if $p_{\Tone''}(v) = \bot$ and $p_{\Ttwo}(v)\neq \bot$, we locate the cut operation $(v\dagger u)$ in $s$ (there must be such operation, as $\Tone$ and $\Ttwo$ have the same root that is not permuted). In $s'$, we replace this operation with $v\,|\,u\rightarrow w$, where $w=p_{\Ttwo}(v)$.
Additionally, we reorder all link-and-cut operations to ensure that $w$ is not a descendant of $v$, which can be guaranteed by considering $v$ in the decreasing order of their levels in $\Ttwo$.
Let $\Tone'$ be the result of applying $s'$ on $\Tone$, and consider any $v\in [n]$.
If $p_{\Tone''}(v) \neq \bot$ and $p_{\Ttwo}(v) \neq \bot$ then $p_{\Tone'}(v)=p_{\Ttwo}(v)$
because $\Tone'' \sim \Ftwo$, and if $p_{\Tone''}(v) = \bot$ and $p_{\Ttwo}(v)\neq \bot$ then
$p_{\Tone'}(v)=p_{\Ttwo}(v)$ by the choice of $w$.
This shows that $s'$ transforms $\Tone$ into $\Ttwo$, thus $d(\Tone,\Ttwo) \leq \tilde{d}(\Tone,\Ttwo)$.
\end{proof}

\begin{figure}[t]
    \centering
   \includegraphics[width=.7\linewidth]{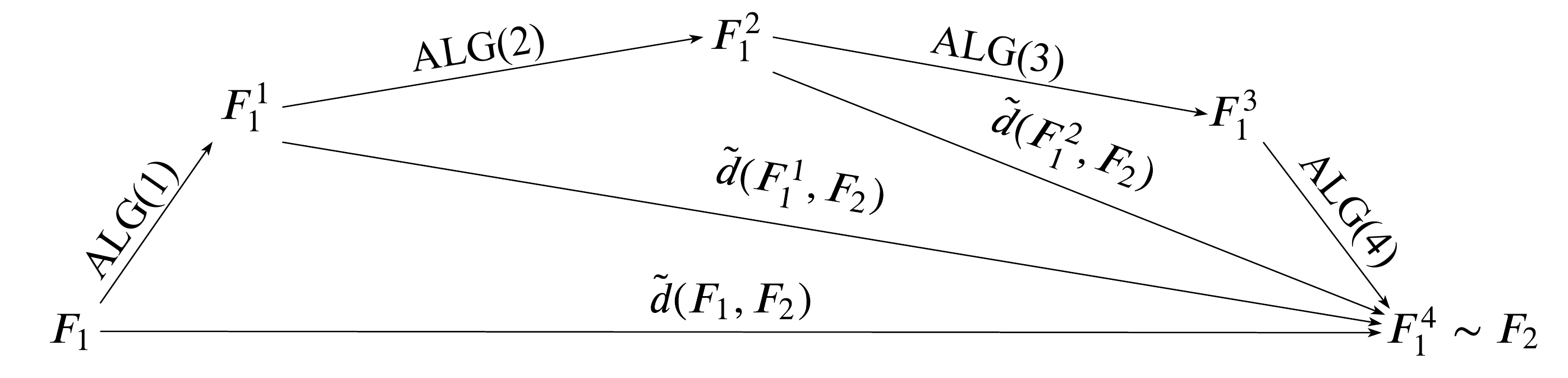}
   \caption{The four steps of the approximation algorithm.}
    \label{fig:approx}
\end{figure}

We can thus approximate $\tilde{d}(\Tone',\Ttwo')$ within a constant factor to obtain a constant factor
approximation of $d(\Tone,\Ttwo)$. In the remaining part of this section we design an approximation algorithm
for $\tilde{d}(\Fone,\Ftwo)$, where $\Fone$ and $\Ftwo$ are two arbitrary forests.

We start with describing the notation. Consider two forests $\Fone$ and $\Ftwo$.
For every $i\in [n]$, let $a[i]\in[n]$ be the parent of a non-root node $i$ in $\Fone$, and $a[i]=0$ if $i$ is a root in $\Fone$. Formally,
$a[i]=p_{\Fone}(i)$ when $p_{\Fone}(i)\neq \bot$ and $a[i]=0$ otherwise;
$b[i]$ is defined similarly but for $\Ftwo$.
We think of $a$ and $b$ as vectors of length $n$.

The algorithm consists of four steps, with step $j$ transforming forest $\Fone^{j-1}$ into $\Fone^{j}$ by performing
$\alg(j)$ operations, starting from $\Fone^{0}=\Fone$.
We will guarantee that $\alg(j) = \cO(\tilde{d}(\Fone^{j-1},\Ftwo))$.
Then, by triangle inequality and symmetry,
$\tilde{d}(\Fone^{j},\Ftwo)\leq \tilde{d}(\Fone^{j-1},\Fone^{j})+\tilde{d}(\Fone^{j-1},\Ftwo)\leq
\alg(j)+\tilde{d}(\Fone^{j-1},\Ftwo)=\cO(\tilde{d}(\Fone^{j-1},\Ftwo))$, so by induction $\tilde{d}(\Fone^{j},\Ftwo)=\cO(\tilde{d}(\Fone,\Ftwo))$.
Consequently, $\alg(j) = \cO(\tilde{d}(\Fone,\Ftwo))$, making the overall cost $\sum_{j}\alg(j) = \cO(\tilde{d}(\Fone,\Ftwo))$ 
In the $j$-th step of the algorithm $a[i]$ refers to the parent of $i$ in $\Fone^{j-1}$.
To analyse each step of the algorithm we will use the following two structures, the first of which is a streamlined version
of family partitions defined in the previous paper~\cite{bernardini2019distance}.

\begin{definition}[family partition]
\label{def:family}
Given two forests $\Fone$ and $\Ftwo$, the \emph{family partition}
$P(\Fone,\Ftwo)$ is the set $\{ (a[i],b[i]) : a[i],b[i]\neq 0~\land~a[i] \neq b[i] \}$.
\end{definition}

\begin{definition}[migrations graph]
\label{def:migrations}
Given two forests $\Fone$ and $\Ftwo$, the \emph{migrations graph}
$MG(\Fone,\Ftwo)$ consists of edges $ \{ (i,j) : a[i],a[j],b[i],b[j]\neq 0~\land~a[i]=a[j]~\land~b[i]\neq b[j] \} $.
\end{definition}

For a multiset $S$, let $|S|$ denote its cardinality, that is, the sum of multiplicities of all distinct elements of $S$.
The mode of $S$, denoted $\mode(S)$, is any element $s\in S$ with the largest multiplicity $\freq_{S}(s)$. We will use the following combinatorial lemma.

\begin{restatable}{lemma}{lempairs}
\label{lem:pairs}
Given any multiset $S$, let $f=\min\{|S|-\freq_{S}(\mode(S)),\lfloor |S|/2\rfloor\}$. All $|S|$ elements of $S$
can be partitioned into $f$ pairs $(x_{1},y_{1}),\ldots,(x_{f},y_{f})$,
$x_{i}\neq y_{i}$, for every $i\in [f]$, and the remaining $|S|-2f$ elements.
\end{restatable}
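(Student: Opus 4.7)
The plan is to split on whether the mode dominates. Let $n=|S|$ and $k=\freq_S(\mode(S))$. A brief integer check (using that $k,n$ are integers and comparing $n-k$ with $\lfloor n/2\rfloor$) shows that $f=n-k$ when $2k>n$ and $f=\lfloor n/2\rfloor$ when $2k\le n$.

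In the dominating case $2k>n$, I would simply pair each of the $n-k$ non-mode elements with a distinct copy of the mode. Every such pair has distinct elements by the very definition of ``non-mode'', giving exactly $f=n-k$ valid pairs; the remaining $k-(n-k)=n-2f$ copies of the mode are left unpaired, as the statement permits.

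In the non-dominating case $2k\le n$ (so $k\le\lfloor n/2\rfloor$) I would list $S$ as a sequence $s_1,s_2,\ldots,s_n$ in which equal elements occupy consecutive positions (sort the elements by their group), and then pair $s_i$ with $s_{i+\lfloor n/2\rfloor}$ for $i=1,\ldots,\lfloor n/2\rfloor$, leaving the single middle element unpaired when $n$ is odd. The only thing to verify is that $s_i\neq s_{i+\lfloor n/2\rfloor}$: otherwise the run of consecutive equal entries containing both positions would span at least $\lfloor n/2\rfloor+1$ positions, exhibiting a value of multiplicity strictly exceeding $k$ and contradicting the choice of the mode. This ``shift-by-half'' collision argument is the only real obstacle, and it uses exactly the hypothesis $k\le\lfloor n/2\rfloor$ that characterises this case.
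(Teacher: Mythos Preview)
Your proof is correct and essentially identical to the paper's: both split on whether the mode dominates, handle the non-dominating case by the same shift-by-half pairing after sorting into runs, and handle the dominating case by matching each non-mode element with a distinct copy of the mode (the paper writes this as the reversal pairing $(s_i,s_{|S|-i+1})$ after placing all mode copies first, which amounts to the same thing).
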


\begin{figure}[h]\hspace{-3mm}
    \begin{subfigure}[c]{0.52\textwidth}
       \includegraphics[width=1\textwidth]{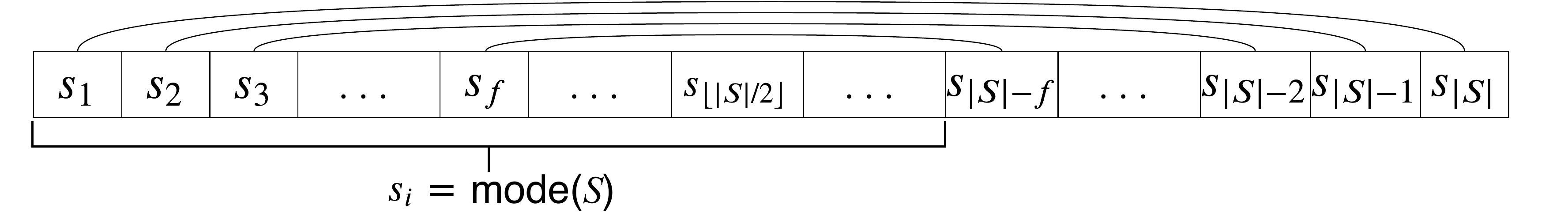}
    \end{subfigure}
    \hspace{-3mm}
    \begin{subfigure}[c]{0.52\textwidth}
       \includegraphics[width=1\textwidth]{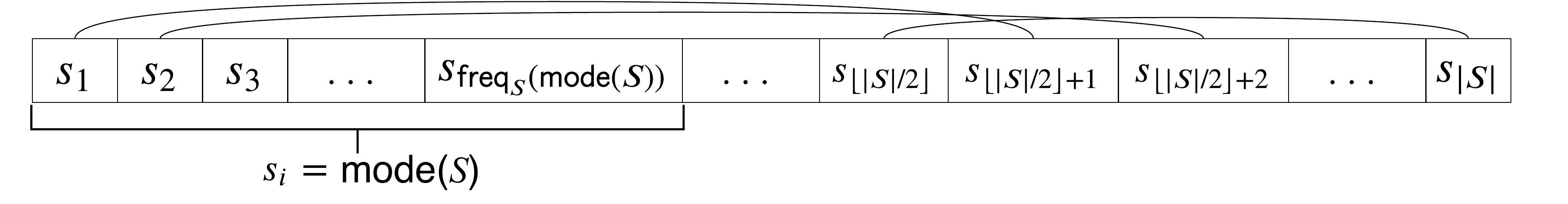}
    \end{subfigure}
    \caption{Pairing in the case $f=|S|-\freq_{S}(\mode(S))$ (left) and $f=\lfloor |S|/2\rfloor$ (right).}
    \label{fig:pairing}
\end{figure}

\begin{proof}
Number the elements of $S$ so that $s_1=\ldots =s_{\freq_S(\mode(S))}=\mode(S)$ and all of the others are sorted and numbered from $\freq_S(\mode(S))+1$ to $|S|$ accordingly.
Then, if $f=|S|-\freq_{S}(\mode(S))$, pairs $(s_i,s_{|S|-i+1})$, $i\in [f]$ are s.t. $s_i\neq s_{|S|-i+1}$ (Figure~\ref{fig:pairing}, left); if $f=\lfloor |S|/2\rfloor$, pairs $(s_i,s_{\lfloor |S|/2\rfloor+i})$, $i\in [\lfloor |S|/2\rfloor]$ are s.t. $s_i\neq s_{\lfloor |S|/2\rfloor+i}$ (Figure~\ref{fig:pairing}, right).
\end{proof}

\subsection{Step 1}
Roughly speaking, the aim of the first step is to ensure that all nodes that might be possibly involved in a permutation, i.e.,
the nodes with different children in $\Fone$ and $\Ftwo$, are roots. This is so that we do not need to worry about the relationship
with their parents.
For every $i\in [n]$ such that $a[i]$ and  $b[i]$ are both defined and different, we cut
the edges from $a[i]$ and $b[i]$ to their parents in $\Fone$, thus making both of them roots.
In other words, for every $i$ such that $a[i],b[i]\neq 0$ and $a[i]\neq b[i]$, we cut edges $(a[i],a[a[i]])$ and $(b[i],a[b[i]])$.
The resulting forest $\Fone^1$ has the following property: for each $i\in [n]$ such that
the parents of $i$ in $\Fone^{1}$ and in $\Ftwo$ are both defined and different,
$a[a[i]]=a[b[i]]=\bot$.

The number of cuts in this step is by definition at most twice the size of the family partition $P(\Fone,\Ftwo)$.
Bernardini et al.~\cite{bernardini2019distance} already showed that $|P(\Tone,\Ttwo)| \leq 2d(\Tone,\Ttwo)$ for two trees $\Tone$ and $\Ttwo$.
We show that this still holds for forests and $\tilde{d}$: for completeness, we provide a self-contained proof (cf. Lemma 16 in~\cite{bernardini2019distance}).
\begin{restatable}{lemma}{lempartition}
\label{lem:activeset}
$|P(\Fone,\Ftwo)|  \leq 2\tilde{d}(\Fone,\Ftwo) $, implying $\alg(1)\le 4\tilde{d}(\Fone,\Ftwo)$.
\end{restatable}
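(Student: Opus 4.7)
The plan is first to establish $|P(\Fone,\Ftwo)|\leq 2\tilde d(\Fone,\Ftwo)$; the bound on $\alg(1)$ follows immediately, because step 1 issues at most one cut per distinct first coordinate and one per distinct second coordinate appearing in the pairs of $P$, giving $\alg(1)\leq 2|P(\Fone,\Ftwo)|\leq 4\tilde d(\Fone,\Ftwo)$.

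To bound $|P|$, I would fix an optimal sequence of operations realizing $\tilde d(\Fone,\Ftwo)$, namely a permutation $\pi$ followed by a set of cuts at nodes $C\subseteq[n]$ with $|\pi|+|C|=\tilde d(\Fone,\Ftwo)$, producing a forest $\Fone'\sim\Ftwo$. For each pair $(a,b)\in P$ pick a representative $i\in[n]$ with $a[i]=a$ and $b[i]=b$; distinct pairs yield distinct representatives. Since $b[i]=b\neq 0$, when $\pi(i)=i$ we also have $b[\pi(i)]\neq 0$, so the $\sim$-condition at $\pi(i)$ forces either $i\in C$ or that the parent of $i$ in $\Fone'$ equals $b$, i.e., $\pi(a)=b$.

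I would then split the representatives into three cases: (i) $\pi(i)\neq i$, so $i$ contributes to $|\pi|$; (ii) $\pi(i)=i$ and $i\in C$, so $i$ contributes to $|C|$; (iii) $\pi(i)=i$ and $i\notin C$, which forces $\pi(a)=b$, so the parent $a\neq b$ is permuted and contributes to $|\pi|$. Distinct representatives contribute distinct elements to $|\pi|$ in case (i) and to $|C|$ in case (ii). In case (iii), if two representatives shared the same parent $a$, then $\pi(a)=b$ would force their $b$ values to coincide, contradicting the distinctness of pairs; hence case (iii) yields distinct elements of $|\pi|$ as well. An element of $|\pi|$ may be double-charged between cases (i) and (iii), while elements of $|C|$ are charged at most once, so $|P|\leq 2|\pi|+|C|\leq 2\tilde d(\Fone,\Ftwo)$.

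The main obstacle is the injectivity argument in case (iii), which turns on the fact that $\pi$ is a single-valued function, so the parent $a$ alone determines $\pi(a)=b$; everything else is a routine case analysis over the optimal sequence, plus the observation that the initial step never needs to cut the same edge twice.
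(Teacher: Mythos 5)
Your proof is correct, but it is structured differently from the paper's. The paper proves the bound as a ``Lipschitz'' property of the potential $|P|$: it shows that a single cut can shrink the family partition by at most $1$ and a permutation of size $s$ by at most $2s$ (via the intermediate set $P_f$ of pairs coming from fixed nodes, then grouping pairs by second coordinate and using injectivity of $\pi$), and concludes because the partition is empty once $\Fone$ has been transformed into some $\Fone'\sim\Ftwo$. You instead fix a normalized optimal solution (one permutation $\pi$ followed by cuts $C$, which the paper's preliminaries justify) and build an explicit charging of the pairs of $P(\Fone,\Ftwo)$ to operations: each pair charges either its representative child (if permuted), its cut, or its first coordinate $a$ via the forced equality $\pi(a)=b$, and single-valuedness of $\pi$ makes the last kind of charge injective, giving $|P|\leq 2|\pi|+|C|\leq 2\tilde d(\Fone,\Ftwo)$. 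The two arguments rest on the same combinatorial insight --- a perturbed node can ``repair'' at most two pairs, once as a misplaced child and once as a parent mapped onto the correct target, while a cut repairs at most one --- and your case (iii) injectivity plays the role of the paper's one-exception-per-second-coordinate argument (yours keys on first coordinates because you charge parents, theirs on second coordinates because they track surviving pairs). What each buys: your version is a clean direct accounting but needs the permutation-before-cuts normal form; the paper's version makes no assumption on the order of operations and establishes the more flexible statement that $|P|$ decreases slowly under \emph{any} prefix of operations, which is the form reused elsewhere in the analysis. Your justification of $\alg(1)\le 2|P(\Fone,\Ftwo)|$ (at most one cut per distinct first and per distinct second coordinate, since a node already made a root is not cut again) matches the paper's.
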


\begin{proof}
It is enough to verify that applying a single cut operation might decrease the size of the family partition by at
most one, while applying a permutation operation $\pi$
might decrease the size of the family partition by at most $2s$, where $s=|\{ u : u \neq \pi(u) \}|$.

Consider a cut operation $(v \dagger u)$. The only change to $a$ is that $a[v]$ becomes $0$,
so indeed the size of the family partition might decrease by at most one.

Now consider a permutation $\pi$. 
After applying $\pi$, an edge $(i,a[i])$ becomes $(\pi(i),\pi(a[i]))$, making $\pi(a[\pi^{-1}(i)])$ the parent of $i$.
This transforms the family partition $P$ into
\[ P'=\{ (\pi(a[i]), b[\pi(i)]) : a[i] \neq 0~\land~b[\pi(i)] \neq 0~\land~\pi(a[i]) \neq b[\pi(i)]\} .\]
To lower bound the size of $|P'|$, we first focus on the subset of $P$ corresponding to the nodes that are fixed by $\pi$. We therefore define
\[ P_{f} = \{ (a[i], b[i]) : a[i] \neq 0~\land~b[i] \neq 0~\land~a[i] \neq b[i]~\land~\pi(i)=i\} . \]
By definition, we can equivalently rewrite $P_{f}$ as
\[ P_{f} = \{ (a[i], b[\pi(i)]) : a[i] \neq 0~\land~b[\pi(i)] \neq 0~\land~a[i] \neq b[\pi(i)]~\land~\pi(i)=i\} . \]
Now consider all pairs with the same second coordinate $y$ in $P_{f}$: $(x_{1},y),(x_{2},y),\ldots,(x_{k},y)$, where $x_{i}\neq y$ for every $i\in [k]$.
$P'$ contains all pairs $(\pi(x_{i}),y)$ such that $\pi(x_{i})\neq y$. If $\pi(y)=y$ then $\pi(x_{i})= y$ cannot
happen and $P'$ contains all pairs with the second coordinate $y$ from $P_{f}$;
otherwise, $P'$ contains all such pairs except possibly one. Overall, $|P'| \geq |P_{f}|-s$, and $|P_{f}| \geq |P|-s$ so indeed $|P'| \geq |P|-2s$.
\end{proof}

\begin{figure}[t]
    \centering
   \includegraphics[width=0.8\linewidth]{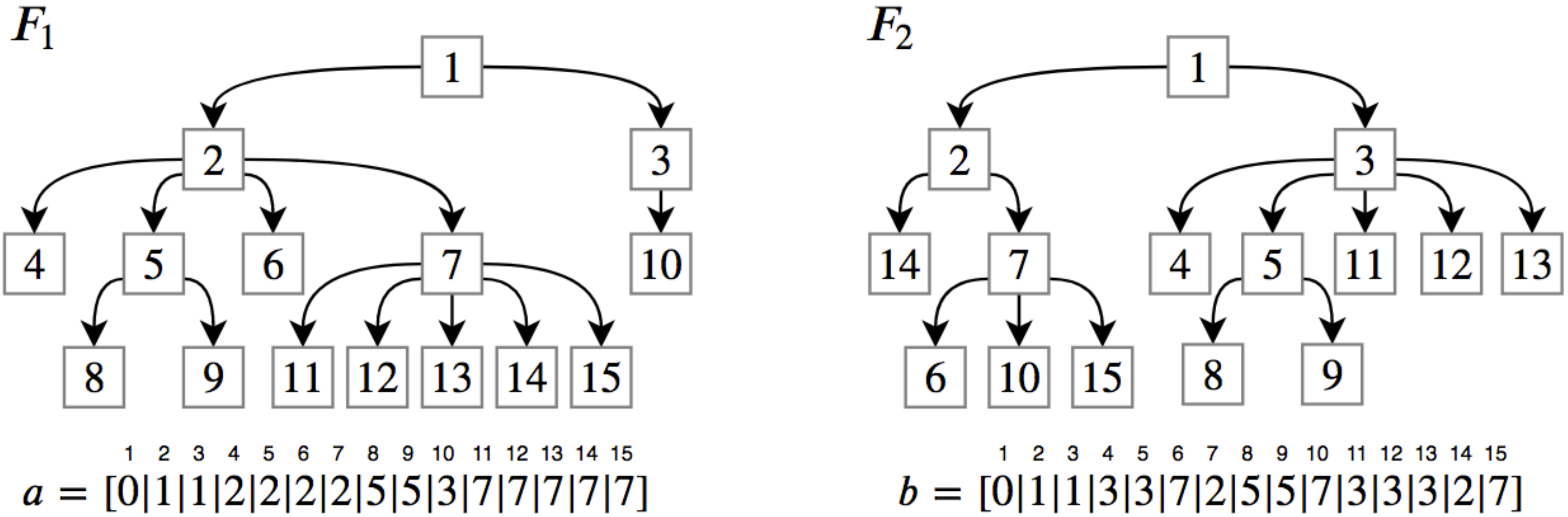}
    \caption{$\Fone$ and $\Ftwo$. The family partition is $P=\{(2,3), (2,7), (3,7), (7,3), (7,2)\}$.}
    \label{fig:trees}
\end{figure}

\begin{figure}\hspace{-1mm}
\centering
    \begin{subfigure}[b]{0.36\textwidth}
       \includegraphics[width=1\textwidth]{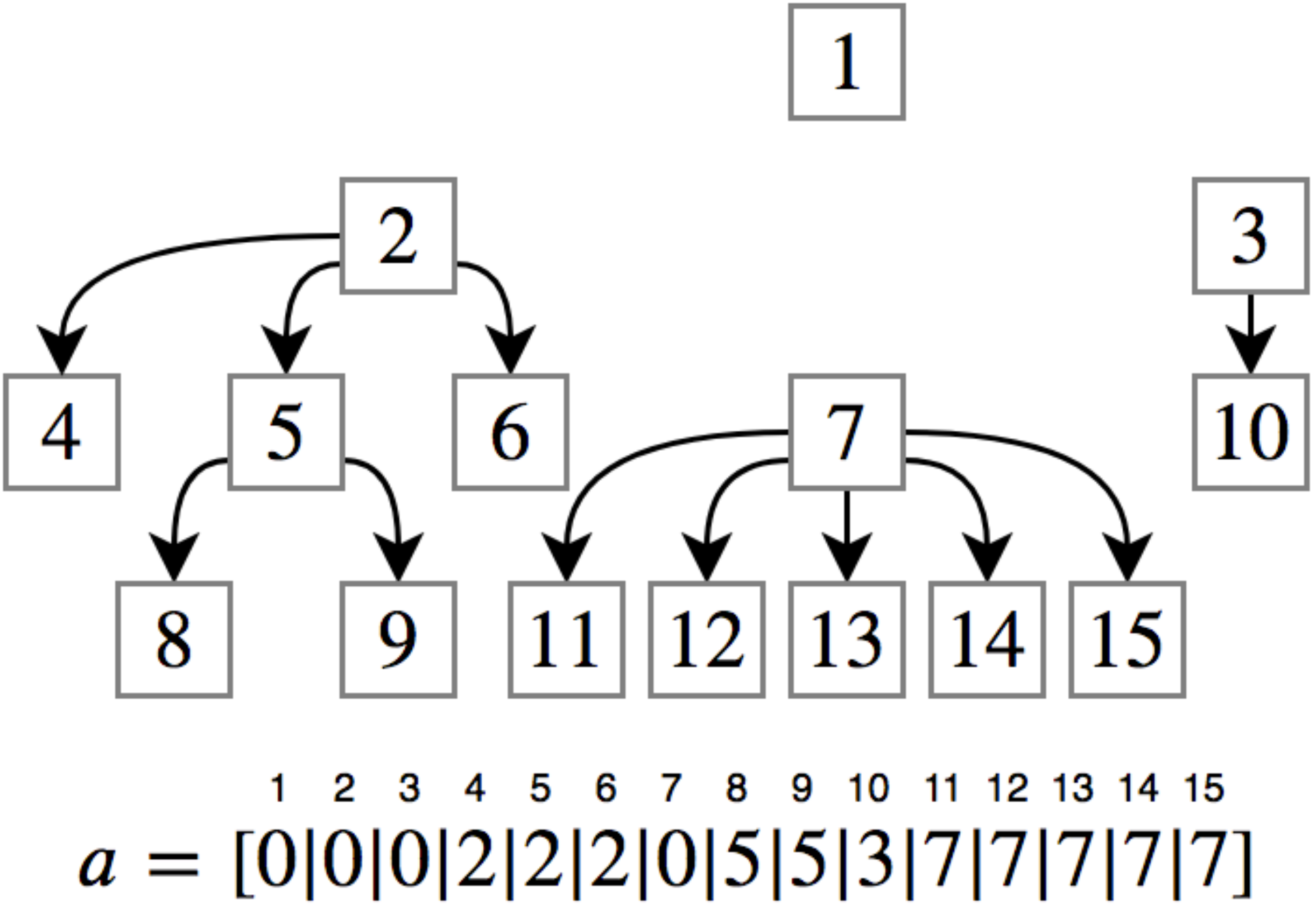}
       \vspace{-3mm}
        \caption{$\Fone^1$}
        \label{fone1}
    \end{subfigure}
    ~~~~ 
    \begin{subfigure}[b]{0.36\textwidth}
       \includegraphics[width=1\textwidth]{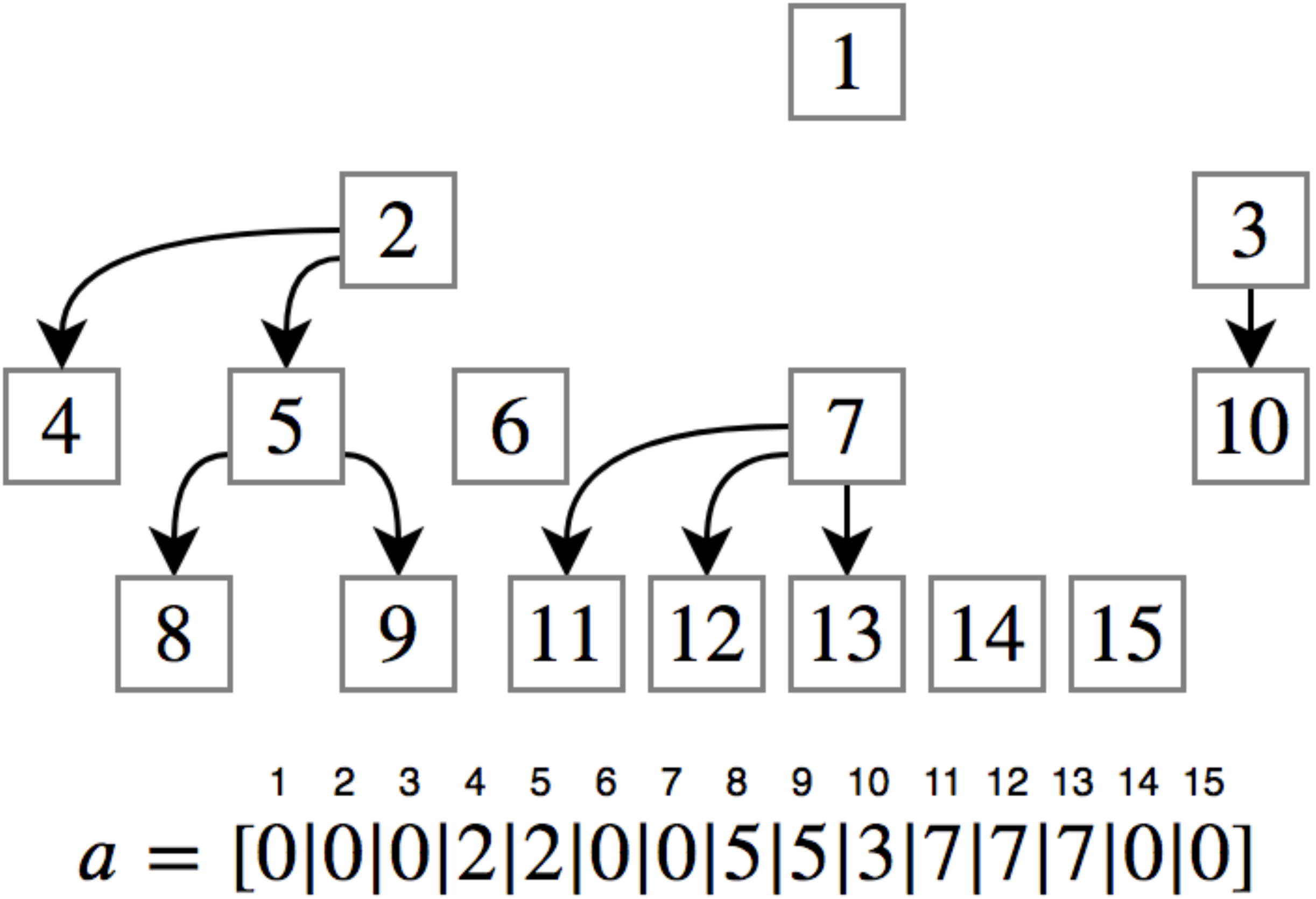}
       \vspace{-3mm}
        \caption{$\Fone^2$}
        \label{fone2}
    \end{subfigure}
  
    \begin{subfigure}[b]{0.36\textwidth}
       \includegraphics[width=1\textwidth]{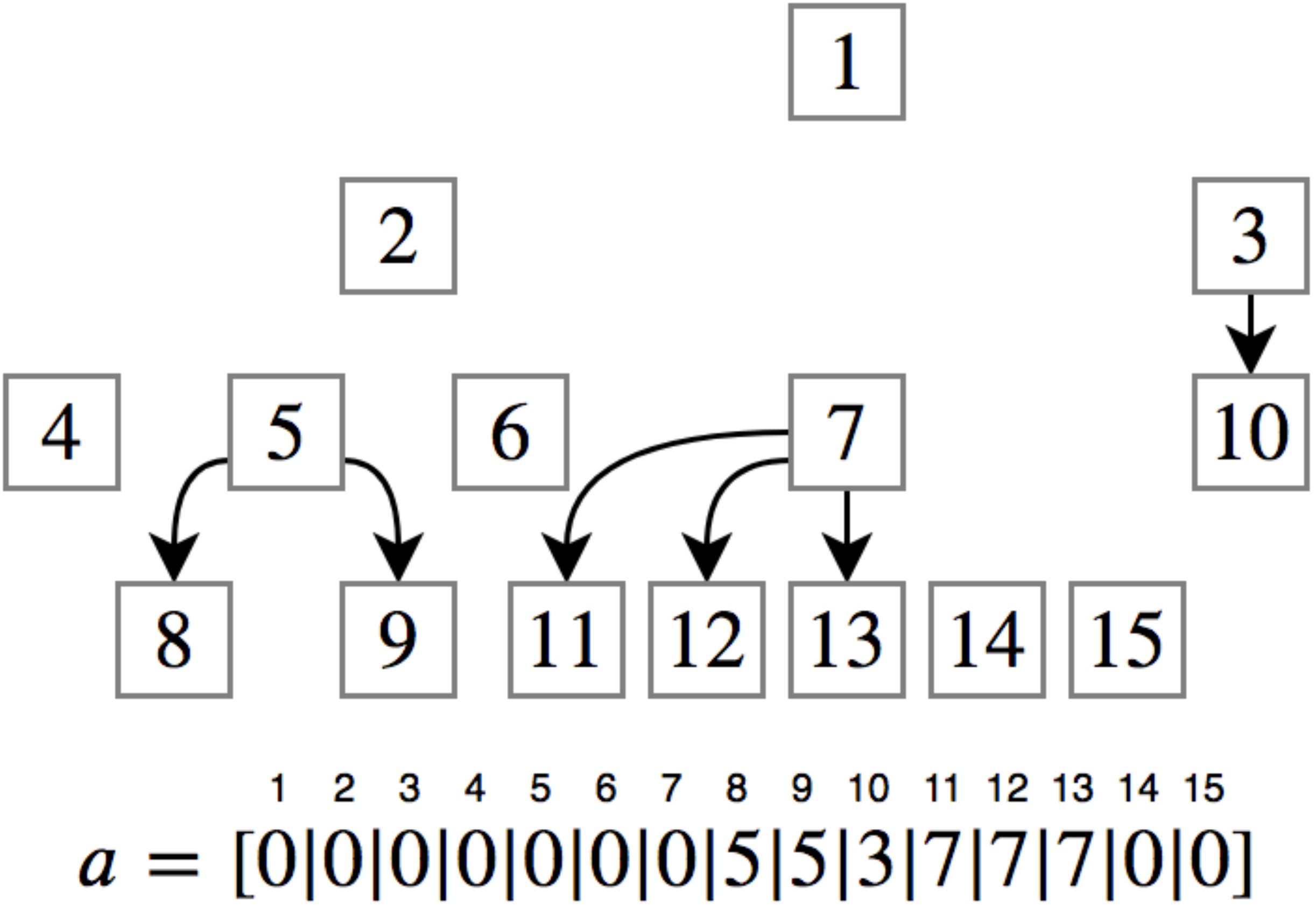}
      \vspace{-3mm}
        \caption{$\Fone^3$}
        \label{fone3}
    \end{subfigure}
    ~~~~
    \begin{subfigure}[b]{0.36\textwidth}
       \includegraphics[width=1\textwidth]{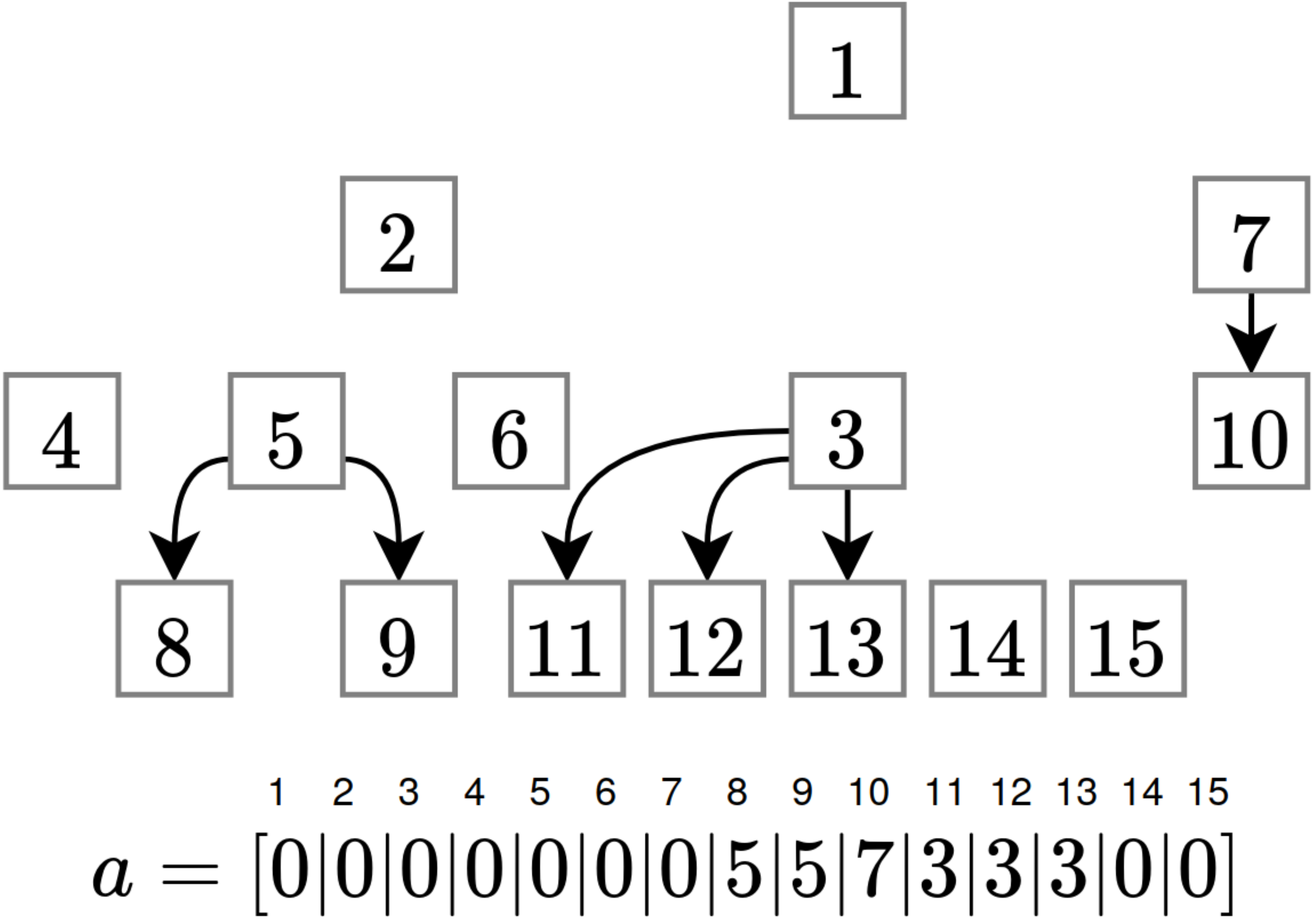}
      \vspace{-3mm}
        \caption{$\Fone^4$}
        \label{fone4}
    \end{subfigure}
    
    \caption{The forests obtained after Step 1 (\ref{fone1}), Step 2 (\ref{fone2}), Step 3 (\ref{fone3}) and Step 4 (\ref{fone4}).} \label{fig:forest}
\end{figure}

\begin{example}
\label{ex:step1}
Consider $\Fone$ and $\Ftwo$ depicted in Figure~\ref{fig:trees}. Step 1 consists of cut operations $(2\dagger1)$ (because, e.g., $a[4]\neq b[4]$ and $a[4]=2$), $(3\dagger1)$ (because $b[4]=3$) and $(7\dagger2)$ (because, e.g., $a[11]\neq b[11]$ and $a[11]=7$). The resulting forest $\Fone^1$ is shown in Figure~\ref{fone1}.
\end{example}

\subsection{Step 2}
Consider $u\in [n]$, and let $\children_{\Fone^{1}}(u)=\{v_{1},\ldots,v_{k}\}$.
We define the multiset $B(u)=\{b[v_{i}] : b[v_{i}]\neq 0 \}$ containing the parents in $\Ftwo$ of
the children of $u$ in $\Fone^{1}$.
Recall that $\mode(B(u))$ is the most frequent element of $B(u)$ (ties are broken arbitrarily).
We cut all edges $(v_{i},u)$ such that $b[v_{i}]\neq 0$ and $b[v_i]\neq \mode(B(u))$, and define, for each $u\in [n]$, its representative  $\rep(u)=\mode(B(u))$.
Intuitively, $\rep(u)$ is the node that might be convenient to replace $u$ with using a permutation. 
Roughly speaking, in this step we get rid of all of the children of $u$ that would be misplaced after permuting $u$ and $\rep(u)$, for each $u\in[n]$. The resulting forest $\Fone^2$ has the following property: for each $u\in [n]$, 
for any child $v$ of $u$ in $\Fone^{2}$, either $b[v]=0$ or $b[v]=\rep(u)$, i.e., the children of each node $u$ of $\Fone^2$ have all the same parent $\rep(u)$ in $\Ftwo$.

To bound the number of cuts in this step we first need a technical lemma relating the rearrangement distance of two
forests and the size of any matching in their migrations graph.

\begin{lemma}
\label{lem:opt}
Consider two forests $\Fone$ and $\Ftwo$ and their migrations graph $MG(\Fone,\Ftwo)$. For any matching $M$ in
$MG(\Fone,\Ftwo)$ it holds that $|M|\le \tilde{d}(\Fone,\Ftwo)$.
\end{lemma}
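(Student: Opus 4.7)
The plan is to leverage the closed-form expression for $\tilde{d}(\Fone,\Ftwo)$ given in the Preliminaries, namely the minimum over permutations $\pi$ fixing the original root of
\[ C(\pi) \;=\; |\{u : \pi(u) \neq u\}| \;+\; |\{u : p_{\Fone}(u) \neq p_{\Ftwo}(\pi(u)) \,\land\, p_{\Fone}(u)\neq\bot \,\land\, p_{\Ftwo}(\pi(u))\neq \bot\}|. \]
It therefore suffices to show $C(\pi) \geq |M|$ for every such $\pi$; minimising over $\pi$ then yields the lemma.

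I would set up a charging argument that assigns to each edge $e=(i,j) \in M$ a representative endpoint $\phi(e) \in \{i,j\}$ contributing at least $1$ to $C(\pi)$. Because $M$ is a matching its edges are vertex-disjoint, so $\phi$ is automatically injective, and the bound follows by summing contributions over $\phi(M)$. Concretely, given $\pi$ and $e=(i,j)$: if $\pi(i) \neq i$, set $\phi(e) = i$, which contributes to the first term of $C(\pi)$; symmetrically if only $\pi(j) \neq j$. Otherwise $\pi$ fixes both endpoints, so $p_{\Ftwo}(\pi(i))=b[i]$ and $p_{\Ftwo}(\pi(j))=b[j]$; by definition of the migrations graph $a[i]=a[j]$ while $b[i]\neq b[j]$, so at least one of $a[i]\neq b[i]$, $a[j]\neq b[j]$ must hold, and I would let $\phi(e)$ be that endpoint (all four parents $a[i],a[j],b[i],b[j]$ are nonzero by the definition of $MG$, so this endpoint does satisfy the conditions of the second term).

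I expect the only point requiring verification, rather than a genuine obstacle, to be the ``both endpoints fixed'' case: one must check that $a[i]=a[j]$ together with $b[i]\neq b[j]$ precludes simultaneously $a[i]=b[i]$ and $a[j]=b[j]$, which is immediate since otherwise $b[i]=a[i]=a[j]=b[j]$, a contradiction. Granted this, summing the contributions over $\phi(M)$ gives $C(\pi) \geq |\phi(M)| = |M|$, which is exactly the desired bound.
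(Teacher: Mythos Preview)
Your proof is correct and takes a genuinely different route from the paper. The paper argues operationally: it shows that any single operation (a cut, or a permutation $\pi$) applied to $\Fone$ can only shrink a matching in the migrations graph by at most the cost of that operation, so by induction along an optimal sequence the initial matching size is bounded by the total cost. You instead go straight to the closed-form expression $\tilde d(\Fone,\Ftwo)=\min_\pi C(\pi)$ from the Preliminaries and charge each matching edge to one of its endpoints lying in $\{u:\pi(u)\neq u\}\cup\{u:a[u]\neq b[\pi(u)],\,a[u]\neq 0,\,b[\pi(u)]\neq 0\}$; vertex-disjointness of $M$ makes the charge injective, and the pigeonhole in the ``both fixed'' case is exactly the observation you spell out. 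Your argument is shorter and avoids the step-by-step tracking of how edges of $MG$ survive each operation; the paper's version has the mild advantage of being self-contained at the level of individual operations (so it would extend more readily to variants where a closed form for $\tilde d$ is not already in hand), but for the lemma as stated your direct approach is at least as clean.
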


\begin{proof}
By definition, there is an edge between $i$ and $j$ in $MG(\Fone,\Ftwo)$ if and only if $a[i]=a[j]$, but $b[i]\neq b[j]$.
Let $M$ be any matching in $MG(\Fone,\Ftwo)$. If $|M| > 0$ then $\tilde{d}(\Fone,\Ftwo) \geq 1$,
so it is enough to show that, for a single operation transforming $\Fone$ into $\Fone'$,
the graph $MG(\Fone',\Ftwo)$ contains a matching $M'$ of size at least $|M|-s$, where $s=1$ for a cut operation and $s=|\{ u : u \neq \pi(u) \}|$ for a permutation operation $\pi$.

First, consider a cut operation $(v \dagger u)$. The only change in $MG(\Fone',\Ftwo)$ is removing
all edges incident to $v$. $M$ contains at most one edge incident to $v$, so we construct $M'$
of size at least $|M|-1$ from $M$ by possibly removing a single edge.
Second, consider a permutation operation $\pi$: we construct $M'$
from $M$ by removing every edge $(v,w)$ such that $v \neq \pi(v)$ or $w \neq \pi(w)$. Because
there is at most one edge incident to every $u$ such that $u \neq \pi(u)$, $M'$ contains
at least $|M|-s$ edges. $M'$ is a matching in $MG(\Fone',\Ftwo)$, as for every $(v,w)\in M'$
we have $p_{\Fone'}(v)=p_{\Fone}(v)$ and $p_{\Fone'}(w)=p_{\Fone}(w)$.
\end{proof}

\begin{lemma}
\label{lem:step2}
$\alg(2)\le 2\tilde{d}(\Fone^1,\Ftwo)$.
\end{lemma}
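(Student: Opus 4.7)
The plan is to bound $\alg(2)$ by constructing a large matching in the migrations graph $MG(\Fone^1,\Ftwo)$ and then invoking Lemma~\ref{lem:opt}. For each $u\in[n]$, let $c_u = |B(u)| - \freq_{B(u)}(\mode(B(u)))$. By the description of Step~2, the edges cut at $u$ are exactly those $(v_i,u)$ with $b[v_i]\neq 0$ and $b[v_i]\neq\mode(B(u))$, so the number of such cuts equals $c_u$. Therefore $\alg(2)=\sum_{u\in[n]} c_u$, and the goal reduces to showing $\sum_u c_u \le 2\,\tilde d(\Fone^1,\Ftwo)$.

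The key observation is that within the multiset $B(u)$, any two children $v_i,v_j$ of $u$ in $\Fone^1$ with $b[v_i]\neq b[v_j]$ give rise to an edge of $MG(\Fone^1,\Ftwo)$, because $a[v_i]=a[v_j]=u$. I would apply Lemma~\ref{lem:pairs} to the multiset $B(u)$ (identifying each element with the child that contributed it) to obtain $f_u := \min\{c_u,\lfloor |B(u)|/2\rfloor\}$ disjoint pairs of children whose $b$-values differ. This yields a matching $M_u$ of size $f_u$ in $MG(\Fone^1,\Ftwo)$ using only children of $u$. Since different nodes have disjoint sets of children, the union $M=\bigcup_u M_u$ is still a matching in $MG(\Fone^1,\Ftwo)$, of size $\sum_u f_u$. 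By Lemma~\ref{lem:opt}, $\sum_u f_u = |M|\le \tilde d(\Fone^1,\Ftwo)$.

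It remains to argue $c_u \le 2f_u$ for every $u$, which closes the chain $\alg(2)=\sum_u c_u \le 2\sum_u f_u \le 2\,\tilde d(\Fone^1,\Ftwo)$. The check splits into two cases: if $c_u \le \lfloor |B(u)|/2\rfloor$ then $f_u=c_u$ and the bound is trivial; otherwise $f_u=\lfloor |B(u)|/2\rfloor$, and since $\freq_{B(u)}(\mode(B(u)))\ge 1$ whenever $|B(u)|\ge 1$, we have $c_u \le |B(u)|-1 \le 2\lfloor|B(u)|/2\rfloor = 2f_u$ (the last inequality is easy to verify separately for $|B(u)|$ even and odd).

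I do not expect a genuine obstacle here: Lemma~\ref{lem:pairs} is tailored exactly to extract such pairings from the mode-based accounting, and Lemma~\ref{lem:opt} converts matching size into a lower bound on $\tilde d$. The only subtlety is making sure that the pairs across different $u$'s really do combine into a single matching, which follows immediately from the fact that a node has exactly one parent in $\Fone^1$, so the vertex sets of the local matchings $M_u$ are disjoint.
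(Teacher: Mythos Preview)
Your proposal is correct and follows essentially the same approach as the paper: per node $u$, apply Lemma~\ref{lem:pairs} to $B(u)$ to extract $f_u$ disjoint pairs of children with differing $b$-values, verify $c_u\le 2f_u$ via the same $|B(u)|-1\le 2\lfloor |B(u)|/2\rfloor$ bound, take the union of these local matchings (disjoint because children have a unique parent in $\Fone^1$), and conclude via Lemma~\ref{lem:opt}. Your write-up is in fact slightly more explicit than the paper's about why the union is still a matching.
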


\begin{proof}
We consider each $u\in [n]$ separately. Let $m=\freq_{B_{u}}(\mode(B_u))$ and
$MG_{u}$ be the subgraph of $MG(\Fone^{1},\Ftwo)$ induced by $B_{u}$. We will first construct
a matching of appropriate size in every $MG_{u}$.
We cut every $(v_{i},u)$ such that $b[v_{i}]\neq 0$ and $b[v_{i}] \neq \mode(B_{u})$,
making $|B_{u}|-m$ cuts. Let $f=\min(|B_{u}|-m,\lfloor |B_{u}|/2\rfloor)$.
By Lemma~\ref{lem:pairs}, we can partition a subset of $B_{u}$
into $f$ pairs $(b[v_{i}],b[v_{j}])$ such that $b[v_i]\neq b[v_j]$. We add
every edge $(v_{i},v_{j})$ to the constructed matching. We claim that
$|B_{u}|-m \leq 2f$. This holds because $|B_{u}|-m \leq 2(|B_{u}|-m)$
and $|B_{u}|-m \leq |B_{u}|-1 \leq 2\lfloor |B_{u}/2|\rfloor$ for nonempty $B_{u}$.

We take the union of all such matchings to obtain a single matching $M$.
As argued above, the total number of cuts is at most $2|M|$.
Together with Lemma~\ref{lem:opt}, this implies that $\alg(2)\le 2|M| \le 2\tilde{d}(\Fone^1,\Ftwo)$.
\end{proof}

\begin{example}
\label{ex:step2}
Consider again $\Fone$ and $\Ftwo$ of Figure~\ref{fig:trees}.  $B(7)=\{3,3,3,2,7\}$, thus we cut $(14\dagger7)$ and $(15\dagger7)$. $B(2)=\{3,3,7\}$, implying  $(6\dagger2)$.
The resulting $\Fone^2$ is shown in Figure~\ref{fone2}.
\end{example}

\subsection{Step 3}
If after Step 2 all of the children of a node $u$ of $\Fone$ have the same parent $\rep(u)$ in $\Ftwo$, it still may be the case where $\rep(u)=\rep(v)$ with $u\neq v$, i.e., all of the children of two distinct nodes of $\Fone$ have the same parent in $\Ftwo$.
In this case, it is not clear how to choose whether to replace $u$ or $v$ with $\rep(u)=\rep(v)$ in a permutation.
This step aims at resolving this situation by cutting the ambiguous edges.

Consider thus $u\in [n]$, and let $\children_{\Ftwo}(u)=\{v_{1},v_{2},\ldots,v_{k}\}$. 
We define the multiset
$B'(u)=\{ a[v_{i}] : a[v_{i}] \neq 0 \}$ containing the parents in $\Fone^{2}$ of the children of $u$ in $\Ftwo$. We
cut all edges $(v_{i},a[v_{i}])$ such that $a[v_{i}]\neq 0$ and $a[v_{i}]\neq \mode(B'(u))$, breaking ties arbitrarily, and define $\rep'(u)=\mode(B'(u))$.
The resulting forest $\Fone^{3}$ has
the following property: for each $u\in [n]$, 
for any child $v$ of $u$ in $\Ftwo$, we have $a[v])=\bot$ or $a[v]=\rep'(u)$.

We observe that the number of cuts performed by the above procedure is the same as if we had applied
Step 2 on $\Ftwo$ and $\Fone^{2}$. Therefore, Lemma~\ref{lem:step2} implies the following.

\begin{lemma}
\label{lem:step3}
$\alg(3)\le 2\tilde{d}(\Fone^2,\Ftwo)$.
\end{lemma}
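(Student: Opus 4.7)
The plan is to exploit the manifest symmetry between Step 2 and Step 3 and then simply reapply Lemma~\ref{lem:step2} with the roles of the two forests swapped. First I would note that Step 2, when applied to an ordered pair of forests $(F, F')$, operates as follows: for every node $u$, it looks at the children $v_1, \ldots, v_k$ of $u$ in $F$, collects the multiset of their parents in $F'$, and cuts away in $F$ every edge $(v_i, u)$ whose child lands, in $F'$, under some non-modal parent. The definition of Step 3 is obtained verbatim from this description by interchanging $F$ and $F'$: we look at children of $u$ in $\Ftwo$, collect their parents in $\Fone^2$, and cut away (in $\Fone^2$) the edges mapping to non-modal parents. Consequently the set of cut operations performed by Step 3 on input $(\Fone^2, \Ftwo)$ is exactly the set of cut operations that Step 2 would perform on input $(\Ftwo, \Fone^2)$, and so $\alg(3)$ equals the quantity that Lemma~\ref{lem:step2} would label $\alg(2)$ for the swapped pair.

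Next I would apply Lemma~\ref{lem:step2} to the pair $(\Ftwo, \Fone^2)$. The lemma is stated for an arbitrary pair of forests (the labelling $\Fone^1$ in its statement is only cosmetic; nothing in its proof uses that $\Fone^1$ arose from Step 1), so it yields $\alg(3) \le 2\tilde{d}(\Ftwo, \Fone^2)$.

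Finally I would invoke the symmetry of $\tilde{d}$ already established in the Preliminaries: the expression $|\{u:\pi(u)\neq u\}|+|\{u:p_{\Fone}(u)\neq p_{\Ftwo}(\pi(u))\land p_{\Fone}(u)\neq\bot\land p_{\Ftwo}(\pi(u))\neq\bot\}|$ is symmetric in $\Fone$ and $\Ftwo$ under replacing $\pi$ by $\pi^{-1}$, so $\tilde{d}(\Ftwo, \Fone^2) = \tilde{d}(\Fone^2, \Ftwo)$. Combining the two equalities gives $\alg(3) \le 2\tilde{d}(\Fone^2, \Ftwo)$.

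I do not expect any genuine obstacle here; the only thing to be careful about is checking that Lemma~\ref{lem:step2} really does apply to an arbitrary input pair rather than to the specific pair $(\Fone^1, \Ftwo)$ for which it was originally stated. Reading its proof, the only ingredients used are Lemma~\ref{lem:pairs} and Lemma~\ref{lem:opt}, both of which hold for arbitrary forests, so this check is immediate.
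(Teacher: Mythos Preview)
Your proposal is correct and follows exactly the paper's own argument: the paper too just observes that the number of cuts in Step~3 equals the number of cuts Step~2 would make on the swapped input $(\Ftwo,\Fone^{2})$, and then invokes Lemma~\ref{lem:step2} together with the symmetry of $\tilde d$. One tiny imprecision worth noting: the actual cut operations are not literally the same set (Step~2 on the swapped pair would cut edges $(v_i,u)$ in $\Ftwo$, whereas Step~3 cuts edges $(v_i,a[v_i])$ in $\Fone^{2}$), but the \emph{count} coincides, which is all that is needed for the bound on $\alg(3)$.
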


\begin{example}
\label{ex:step3}
Consider again $\Fone$ and $\Ftwo$ of Figure~\ref{fig:trees}. 
We have $B'(3)=\{2,2,7,7,7\}$, we thus cut $(4\dagger2)$ and $(5\dagger2)$.
The resulting forest $\Fone^3$ is shown in Figure~\ref{fone3}. 
\end{example}

\subsection{Step 4}

We summarize the properties of $\Fone^{3}$ and $\Ftwo$:
\begin{enumerate}
\item For each $u\in [n]$ such that $a[u],b[u]\neq 0$ and $a[u]\neq b[u]$, $a[u]$ and $b[u]$ are roots in $\Fone^{3}$.\label{prop:one}
\item For each $u\in [n]$ we can define $\rep(u)\in [n]$ in such a way that, for any child $v$ of $u$ in $\Fone^{3}$, we have $b[v]=0$ or $b[v]=\rep(u)$.\label{prop:two}
\item For each $u\in [n]$ we can define $\rep'(u)\in [n]$ in such a way that, for any child $v$ of $u$ in $\Ftwo$, we have $a[v]=0$ or $a[v]=\rep'(u)$.\label{prop:three}
\end{enumerate}
To finish the description of the algorithm, we show how to find a permutation operation $\pi$ of size $\cO(\tilde{d}(\Fone^{3},\Ftwo))$
that transforms $\Fone^{3}$ into $\Fone^{4}$ such that $\Fone^{4} \sim \Ftwo$.

For every $u$ such that $a[u],b[u]\neq 0$ and $a[u]\neq b[u]$, we require that $\pi(a[u])=b[u]$.
Due to Property~\ref{prop:one}, for every such $u$ we have ensured that $a[u]$ and $b[u]$ are roots
of $\Fone^{3}$. So, if we can find a permutation $\pi$ that satisfies all the requirements and does not perturb the non-roots
of $\Fone^{3}$, then it will transform $\Fone^{3}$ into $\Fone^{4}$ such that $\Fone^{3}\sim \Ftwo$.
Furthermore, if for every $x$ perturbed by $\pi$ there exists $u$ such that
$a[u],b[u]\neq 0$ and $a[u]\neq b[u]$ with $x=a[u]$ or $x=b[u]$ then by Lemma~\ref{lem:activeset}
$|\pi| \leq 2|P(\Fone^{3},\Ftwo)| \leq 4\tilde{d}(\Fone^{3},\Ftwo)$ as required.

To see that there indeed exists such $\pi$, observe that due to Property~\ref{prop:two}
there cannot be two requirements $\pi(x)=y$ and $\pi(x)=y'$ with $y\neq y'$.
Similarly, due to Property~\ref{prop:three} there cannot be two requirements $\pi(x)=y$ and $\pi(x')=y$ with $x\neq x'$. 
Thinking of the requirements as a graph, the in- and out-degree of every node is hence at most 1, so we can add extra edges to obtain a collection of cycles defining a permutation $\pi$ that does not perturb the nodes not participating in any requirement.

\begin{example}
Consider $\Fone$ and $\Ftwo$ of Figure~\ref{fig:trees}. $\pi=(3~7)$ transforms $\Fone^3$ into $\Fone^4\sim\Ftwo$.
The final $\Fone^{4}$ is shown in  Figure~\ref{fone4}.
\end{example}

\bibliographystyle{plainurl}
\bibliography{references}

\end{document}